\begin{document}
\title{Influential Slot and Tag Selection in Billboard Advertisement}
\author{Dildar Ali \and Suman Banerjee \and Yamuna Prasad}
\authorrunning{Ali et al.} % abbreviated author list (for running head)
\institute{Indian Institute of Technology Jammu, J \& K-181221, India \email{\{2021rcs2009,suman.banerjee,yamuna.prasad\}@iitjammu.ac.in}}
\maketitle

\begin{abstract}
The selection of influential billboard slots remains an important problem in billboard advertisements. Existing studies on this problem have not considered the case of context-specific influence probability. To bridge this gap, in this paper, we introduce the \textsc{Context Dependent Influential Billboard Slot Selection Problem}. First, we show that the problem is NP-hard. We also show that the influence function holds the bi-monotonicity, bi-submodularity, and non-negativity properties. We propose an orthant-wise Stochastic Greedy approach to solve this problem. We show that this method leads to a constant-factor approximation guarantee. Subsequently, we propose an orthant-wise Incremental and Lazy Greedy approach. In a generic sense, this is a method for maximizing a bi-submodular function under the cardinality constraint, which may also be of independent interest. We analyze the performance guarantee of this algorithm as well as time and space complexity. The proposed solution approaches have been implemented with real-world billboard and trajectory datasets. We compare the performance of our method with several baseline methods, and the results are reported. Our proposed orthant-wise stochastic greedy approach leads to significant results when the parameters are set properly with reasonable computational overhead.
\keywords{Billboard Advertisement \and Billboard Database \and Trajectory Database \and Influence Probability \and Bi-submodularity}
\end{abstract}

\section{Introduction} 
In recent times, \emph{Billboard Advertisement} has emerged as an effective out-of-home advertisement technique due to multiple reasons such as being easy to adopt, ensuring a return on investment\footnote{\url{https://www.thebusinessresearchcompany.com/report/billboard-and-outdoor-advertising-global-market-report}}. If we have the location information of a group of people over different time stamps and locations of a set of billboards then appropriate advertisement contents could be displayed on the billboards, and it may lead to an influence among the people. In billboard advertisements, the billboards are owned by some billboard owners (e.g., Lamar, Sigtel, etc.), and different commercial houses approach a billboard owner for a number of billboard slots depending on their budget. Given a trajectory database, a billboard database, and a positive integer $k$, which $k$ billboard slots should be chosen to maximize the influence? This problem has been referred to as the \textsc{Top-$k$ Influential Billboard Slot Selection Problem} \cite{ali2023influential}, and a few solution methodologies are available. The influence probability between a billboard slot and a trajectory has been considered in all these studies \cite{ali2023influential,10.1145/3292500.3330829,ali2025multi} is the same and does not vary. However, in practice, a low-income person will be more influenced toward a low-cost product rather than a high-cost product. Hence, the influence probability is dependent on context, and this notion is captured as a tag-dependent influence probability. In recent times, tag-based influence maximization has gained significant attention, and most of the studies on this topic are concerned with social networks \cite{banerjee2020budgeted,tekawade2023influence}. In both studies, the authors have proposed a bi-set function and an incremental greedy approach that exploits the submodularity property of the influence function. However, such studies have not been done in the context of billboard advertisement, although actual influence is dependent on both slot and tag. Now, the question is that given two positive integers $k$ and $\ell$, which $k$ influential slots and $\ell$ influential tags should be chosen such that the influence is maximized. To the best of our knowledge, such a problem has not been addressed in billboard advertisement settings. However, some studies focus on the influence maximization in the presence of tags in social networks. The first study by Ke et al. \cite{DBLP:conf/sigmod/KeKC18}, where they studied the problem of finding $k$ seed nodes and $r$ influential tags to maximize the influence in the network. Subsequently, other solution methodologies exist in the literature, e.g., the community-based approach \cite{banerjee2022budgeted} that exploits the bi-submodularity of the influence function. This paper bridges this gap by studying the influential billboard slot selection problem in tag-specific influence probability settings. We have posed this problem as a maximization of the bi-submodular set function \cite{schoot2024characterization}. In the literature, several practical problems have been modeled as a maximization of a bi-submodular function, such as influence maximization in social networks \cite{tekawade2023influence}, drug-drug interaction detection \cite{hu2017drug}, and many more. In particular, we make the following contributions in this paper:
\begin{itemize}
\item We study this problem in the tag-specific influence probability setting, where the goal is to select influential slots and tags to maximize the influence.
\item We establish several important properties of the influence function and exploit them to design efficient algorithms to solve this problem. 
\item We propose an efficient Orthent-wise Stochastic Greedy maximization algorithm and subsequently introduce Incremental and Lazy Greedy algorithms.
\item We analyze the algorithm to understand its time and space complexities, performance guarantee, and conduct experiments with real-world trajectory datasets to exhibit the effectiveness and efficiency of the proposed approach.     
\end{itemize}
\par Rest of the paper is organized as follows. Section \ref{Sec:BPD} describes the required background and defines our problem formally. The proposed solution approaches have been described in Section \ref{Sec:Proposed}. Section \ref{Sec:EE} describes the experimental evaluations of the proposed solutions. Finally, Section \ref{Sec:Conclusion} concludes this study and provides future research directions.

\section{Background and Problem Definition} \label{Sec:BPD}
\subsection{Trajectory and Billboard and Tag Database}
A trajectory database contains location information of moving objects over time. In this problem context, the trajectory database $\mathcal{D}$ contains tuples of the form $(\mathcal{U}^{'}, \texttt{loc}, [t_1,t_2])$, signifies the set of people $\mathcal{U}^{'}$ was at the location $\texttt{loc}$ for the duration $[t_1,t_2]$. For any tuple $p \in \mathcal{D}$, let $p_{u}$ denote the set of people associated with it. Let $\mathcal{U}=\{u_1, u_2, \ldots, u_n\}$ denote the set of people covered by the trajectory database, and hence $\mathcal{U}=\underset{p \in \mathcal{D}}{\bigcup} p_{u}$. This is defined as the people for which there exists at least one tuple that contains the people, i.e., $\mathcal{U}=\{u_i: \exists (\mathcal{U}^{'}, \texttt{loc}, [t_1,t_2]) \in \mathcal{D} \texttt{ and } u_i \in \mathcal{U}^{'} \}$. Similarly, $\mathcal{L}$ denotes the set of locations that are covered by the trajectory database $\mathcal{D}$, i.e., $\mathcal{L}=\{\texttt{loc}_{i}: (\mathcal{U}^{'}, \texttt{loc}_{i}, [t_j,t_k]) \in \mathcal{D} \}$. Let $[T_1, T_2]$ be the duration for which the trajectory database $\mathcal{D}$ contains the movement data. The billboard database $\mathcal{B}$ stores information about billboards across a city. Each entry is a tuple $(b_{id}, \texttt{loc}, \texttt{slot\_duration}, \texttt{cost})$, where $b_{id}$ is the billboard ID, $\texttt{loc}$ is the location, $\texttt{slot\_duration}$ is the slot duration, and $\texttt{cost}$ is the associated cost. Assume all billboards operate over the period $[T_1, T_2]$, with each slot having duration $\Delta$. A billboard slot is represented as a tuple of billboard ID and slot duration. The set of all billboard slots, denoted as $\mathcal{BS}$, is defined as: $\mathcal{BS} = \{(b_i, [t_j, t_j + \Delta]) : i \in [m] \text{ and } t_j \in \{1, \Delta+1, 2\Delta+1, \ldots, T_2-\Delta+1\}\}$. The tag database contains information about tags (i.e., advertisement content) from the commercial clients. The tag database $\mathcal{T}$ contains a tuple of the form $(tag\_id, tag\_cost)$, which signifies each tag contains its corresponding unique tag ID and cost. The influence providers allocate slots to commercial clients to maximize product influence. The key question becomes: how can we quantify the influence of a set of billboard slots? This is addressed in Definition \ref{Def:Influence}.

\begin{definition} [Influence of Billboard Slots] \label{Def:Influence}
Given a trajectory database $\mathcal{D}$, and a subset of billboard slots $\mathcal{S} \subseteq \mathcal{BS}$, the influence of $\mathcal{S}$ can be defined as the expected number of trajectories is influenced can be computed using Equation \ref{Eq:Equation_Inf}.

\begin{equation} \label{Eq:Equation_Inf}
    \phi(\mathcal{S})= \underset{u_i \in \mathcal{U}}{\sum} [1- \underset{b_j \in \mathcal{S}}{\prod} (1-Pr(b_j,u_i))]
\end{equation}
\end{definition}

Here, $\phi$ is the influence function that maps each subset of the billboard slots to its expected influence, hence $\phi: 2^{\mathcal{BS}} \longrightarrow \mathbb{R}_{0}^{+}$ and $\phi(\emptyset)=0$. The influence model stated in Definition \ref{Def:Influence} has been widely accepted in the existing studies \cite{zhang2020towards,ali2023influential} on billboard advertisement. Assuming that a person $u_i$ crosses a billboard slot, $bs_i$, at a time $t_x$. Now, assume that the advertisement content of an E-Commerce house is displayed on that billboard in the slot $[t_i,t_j]$, and $t_x \in [t_i,t_j]$. Then $u_i$ is likely to be influenced by the advertisement content with a certain probability. The billboard $b_i$ will influence the user $u_i$ with probability $Pr(bs_i,u_i)$. One of the way to calculate this value as,  $Pr(bs_i,u_i) = \frac{Size(bs_i)}{\underset{bs_{i} \in \mathcal{BS}}{max} \ Size(bs_{i})} $ where $Size(bs_{i})$ is the billboard panel size. We adopt this probability setting in our experiments as well. Although it can be calculated in several ways depending on the needs of applications \cite {10.1145/3292500.3330829,8604082,zhang2020towards}. As mentioned in the literature \cite{DBLP:conf/sigmod/KeKC18}, whether a people will be influenced towards a brand or not is always context dependent. In this study we consider that every people $u_i \in \mathcal{U}$, for a billboard slot $b_j \in \mathcal{BS}$ and every relevant tag $c \in \mathcal{H}^{'}$, there exists a non-zero tag specific influence probability and it is denoted by $Pr(u_i,b_j|c)$. This signifies the influence probability of the people $u_i$ when he/she looks at some advertisement content containing the tag $\textbf{c}$ at the billboard slot $b_j$.  For any person, $u \in \mathcal{U}$, for any set of given tags $\mathcal{H}^{'} \subseteq \mathcal{H}$, it is an important question how to calculate the aggregated influence of the tags in $\mathcal{H}^{'}$. Assume that $\mathcal{H}^{'}(u)$ denotes the subset of the tags used in the advertisement content which are visible to $u$. Now, the aggregated influence will be dependent on how the tags are aggregated. In this study, we use the independent tag aggregation, which has been stated in Definition \ref{Def:tag_aggregation}.

\begin{definition}[Independent Tag Aggregation] \label{Def:tag_aggregation}
For a subset of given slots $\mathcal{S} \subseteq \mathcal{BS}$, tags  $\mathcal{H}^{'}$, as per independent tag aggregation the aggregated influence probability of $u$ can be computed using Equation \ref{Eq:Inf_1}.

\begin{equation} \label{Eq:Inf_1}
    Pr(u, \mathcal{S}|\mathcal{H}^{'})=1-\underset{(b,c) \in f}{\prod} \ (1-Pr(u,b|c))
\end{equation}
\end{definition}
Here, $f$ denotes the tag assignment function.  Now, the aggregated influence for a given subset of billboard slots $\mathcal{S}$, a set of given tags $\mathcal{H}^{'}$ is denoted by $\Phi(\mathcal{S}, \mathcal{H}^{'})$ and stated in Definition \ref{Def:Aggregated_Influence}.

\begin{definition}[Aggregated Influence] \label{Def:Aggregated_Influence}
The aggregated influence for a given subset of billboard slots $\mathcal{S}$, a set of given tags $\mathcal{H}^{'}$ is defined as the sum of the influence probabilities of all the persons as stated in Equation \ref{Eq:Aggregated_Influence}. 

\begin{equation}\label{Eq:Aggregated_Influence}
    \Phi(\mathcal{S}, \mathcal{H}^{'})= \underset{u \in \mathcal{U}}{\sum} \ Pr(u|\mathcal{H}^{'})
\end{equation}
Here, $\Phi(.,.)$ is a bi-set function which is a mapping from  $2^{\mathcal{BS}} \times 2^{\mathcal{H}^{'}}$ to the set of positive real number including $0$, i.e.,  $\Phi: 2^{\mathcal{BS}} \times 2^{\mathcal{H}^{'}} \longrightarrow \mathbb{R}_{0}^{+}$. It can be observed that for any subset of slots $\mathcal{S}$, if the tag set is $\emptyset$ then the aggregated influence will be $0$. Hence, $\Phi(\mathcal{S}, \emptyset)=0$ for all $\mathcal{S} \subseteq \mathcal{BS}$.
\end{definition} 
\subsection{Problem Definition}
As previously mentioned, selecting both tags and billboard slots is important. However, obtaining a required billboard slot from an influence provider is subject to payment, and the e-commerce house doing this advertisement will have budget constraints. So, the goal here is to select $k$ many billboard slots and $\ell$ many tags to maximize the influence. We call this problem the \textsc{Context Dependent Influential Billboard Slot Selection Problem}, which asks for given $k$ and $\ell$, in which $k$ influential slots and tags should be chosen respectively to maximize the influence. We state the problem in Definition \ref{Def:Problem1}.

\begin{definition}[Context Dependent Influential
Billboard Slot Selection Problem] \label{Def:Problem1}
Given a trajectory database $\mathcal{D}$, a billboard database $\mathcal{B}$, and two positive integers $k$ and $\ell$, this problem asks to choose $k$ influential billboard slots and $\ell$ influential tags such that the influence is maximized. Mathematically, this problem can be expressed as follows:
\begin{equation} \label{Eq:4}
    (\mathcal{S}^{OPT}, \mathcal{H}^{OPT}) \longleftarrow \underset{\mathcal{S} \subseteq \mathcal{BS} \wedge \mathcal{H}^{'} \subseteq \mathcal{H} }{argmax} \ \phi(\mathcal{S}| \mathcal{H}^{'})
\end{equation}
\end{definition}
It is reasonable to consider that even if there is no tag, some default tag $h^{'}$ still exists, which can be used even if no tag is selected. We want to select $\ell$ many tags on top of the default tag. In Equation No. \ref{Eq:4}, $\mathcal{S}^{OPT}$ and $ \mathcal{H}^{OPT}$ denote the optimal slot subset of $k$ and the optimal tag subset of $\ell$. It can be easily observed that the problem introduced in Definition \ref{Def:Problem1} is the generalization of the \textsc{Influential Billboard Slot Selection Problem} \cite{ali2022influential,ali2023influential} where the context-dependent influence probability is not considered. Hence, Theorem \ref{Th:1} holds.

\begin{theorem} \label{Th:1}
For a given $k$ and $\ell$, finding the optimal slot and tag set for the Context-Dependent Influential Billboard Slot Selection Problem is \textsf{NP-hard}.
\end{theorem}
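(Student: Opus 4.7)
The plan is to establish NP-hardness by a straightforward polynomial-time reduction from the \textsc{Influential Billboard Slot Selection Problem} (IBSSP), which the paper has already cited as NP-hard \cite{ali2022influential,ali2023influential}. The guiding observation, informally stated in the paragraph just before Theorem~\ref{Th:1}, is that the Context-Dependent problem is a strict generalization of IBSSP obtained by adding a tag dimension, so a suitable restriction of the tag universe should recover IBSSP exactly.

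Concretely, given an arbitrary IBSSP instance $(\mathcal{D}, \mathcal{B}, k)$, I would build a CDIBSSP instance $(\mathcal{D}, \mathcal{B}, \mathcal{H}, k, \ell)$ by taking $\mathcal{H} = \{h'\}$ to contain only the default tag, setting $\ell = 0$, and defining the tag-conditioned influence probability to equal the original unconditioned one, i.e.\ $Pr(u_i, b_j \mid h') := Pr(b_j, u_i)$ for every $(u_i, b_j) \in \mathcal{U} \times \mathcal{BS}$. This construction is clearly polynomial in the input size. With only one tag available, the independent tag aggregation of Equation~\ref{Eq:Inf_1} collapses to $Pr(u, \mathcal{S} \mid \{h'\}) = 1 - \prod_{b_j \in \mathcal{S}}(1 - Pr(u,b_j \mid h'))$, and then Equation~\ref{Eq:Aggregated_Influence} gives $\Phi(\mathcal{S}, \{h'\}) = \phi(\mathcal{S})$ for every $\mathcal{S} \subseteq \mathcal{BS}$. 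Hence a $k$-subset maximizes $\Phi(\cdot, \{h'\})$ iff it maximizes $\phi(\cdot)$, so any polynomial-time exact algorithm for CDIBSSP immediately yields one for IBSSP, contradicting IBSSP's NP-hardness.

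The main obstacle is essentially bookkeeping rather than combinatorics: I must make sure the reduction respects the convention, stated after Definition~\ref{Def:Problem1}, that $\ell$ tags are chosen \emph{in addition to} the ever-present default tag, so that $\ell = 0$ is admissible. If one insists on $\ell \geq 1$, I would augment $\mathcal{H}$ with a dummy tag $h''$ whose tag-specific influence probability is identically zero (equivalently, one that duplicates $h'$ without changing the product in Equation~\ref{Eq:Inf_1}), take $\ell = 1$, and verify that $\Phi(\mathcal{S}, \{h', h''\}) = \phi(\mathcal{S})$, so the equivalence of optima still carries through.

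As a sanity check and alternative route, one could reduce directly from \textsc{Maximum Coverage}: encode ground-set elements as users in $\mathcal{U}$, candidate sets as billboard slots with $0/1$ influence probabilities under the single default tag, and set the coverage budget equal to $k$; the resulting $\phi$ coincides with the coverage count, which is known to be NP-hard to maximize. Either reduction suffices, and both are short, so I do not anticipate any technical difficulty beyond matching the paper's convention on the role of the default tag.
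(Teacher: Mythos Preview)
Your proposal is correct and follows essentially the same approach as the paper: the paper does not give a formal proof but simply remarks that the context-dependent problem is a generalization of the \textsc{Influential Billboard Slot Selection Problem} and that ``Hence, Theorem~\ref{Th:1} holds.'' Your reduction via a single default tag (with $\ell=0$) makes this informal observation precise and in fact supplies more detail than the paper itself.
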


\section{Proposed Solution Approach} \label{Sec:Proposed}

\paragraph{\textbf{Exhaustive Search Approach.}}
In this approach, we enumerate all $k$-sized subsets of the set of billboard slots and $\ell$-sized subsets of tags. Considering all the billboards are running for the duration $[T_1, T_2]$ and the slot duration of $\Delta$ time units, hence the number of billboard slots will be $\frac{T_2-T_1+1}{\Delta} \cdot m$. So, the number $k$-sized subsets will be $\binom{\frac{T_2-T_1+1}{\Delta} \cdot m}{k}$ and the number of $\ell$ sized subsets of $\mathcal{H}$ will be $\binom{|\mathcal{H}|}{\ell}$. Subsequently, we create all possible $k$-sized slot subsets and $\ell$-sized tag subsets pairs, and for every possible slot-tag pair, we compute the influence and choose the one that gives the maximum influence and return it. 

\paragraph{\textbf{Orthant-Wise Greedy Maximization Algorithm.}}
In this approach, we start with a default slot $s^{'}$ and default tag $h^{'}$, and our approach is as follows. First, we fix the tag set to $\{h^{'}\}$ and apply an incremental greedy algorithm \cite{minoux2005accelerated,mirzasoleiman2015lazier} that works based on marginal gain computation to obtain the $k$ size slot set $\mathcal{S}^{'}$. Now, fixing the slot set to $\mathcal{S}^{'} \cup \{s^{'}\}$ , we apply incremental greedy algorithm to obtain the $\ell$ size tag set $\mathcal{H}^{'}$. Next, we do the same thing; however first fix the slot set to the default slot and apply the incremental greedy algorithm to choose an $\ell$ size tag set $\mathcal{H}^{''}$, and then we fix the tag set to $\mathcal{H}^{''} \cup \{h^{'}\}$ and apply the incremental greedy algorithm to obtain the $k$ size slot set $\mathcal{S^{''}}$. So we have two slot-tag pair $(\mathcal{S}^{'}, \mathcal{H}^{'})$ and $(\mathcal{S}^{''}, \mathcal{H}^{''})$. We return one that leads to the maximum influence. This method consists of the following four optimization problems.

\begin{equation}
\mathcal{S}^{'} \longleftarrow \underset{\mathcal{S} \subseteq \mathcal{BS} \wedge |\mathcal{S}|=k}{argmax} \ \phi(\mathcal{S} \cup \{s^{'}\}, \{h^{'}\})
\end{equation} 
\begin{equation}
\mathcal{H}^{'} \longleftarrow \underset{H \subseteq \mathcal{H} \wedge |H|=\ell}{argmax} \ \phi(\mathcal{S}^{'}, H \cup \{h^{'}\})
\end{equation} 

\begin{equation}
\mathcal{H}^{''} \longleftarrow \underset{H \subseteq \mathcal{H} \wedge |H|=\ell}{argmax} \ \phi(\{s^{'}\}, H \cup \{h^{'}\})
\end{equation}

\begin{equation}
\mathcal{S}^{''} \longleftarrow \underset{\mathcal{S} \subseteq \mathcal{BS} \wedge |\mathcal{S}|=k}{argmax} \ \phi(\mathcal{S} \cup \{s^{'}\},\mathcal{H}^{''})
\end{equation}

\paragraph{\textbf{Lazy Greedy Algorithm.}}
This approach involves excessive influence function evaluations, leading to high execution time. However, it can be implemented efficiently, with fewer evaluations in most practical cases, though the worst-case scenario matches the incremental greedy algorithm. The key idea is to consider the first for loop and its first iteration. We compute the marginal gain for all slots with respect to the empty tag set, which is equivalent to computing their influence value. Subsequently, we sort the slots based on this value in descending order, and the first slot is chosen. Now, in the second iteration, we compute the marginal gain of the slots in sorted order and consider the following situation. Suppose the marginal gain of the $i$-th slot is less than that of the $(i+1)$-th slot. Now, applying the submodularity property, it can be ensured that even if we compute the marginal gain of the slots, it can not be more than the marginal gain of the $i$-th slot. Hence, from the $(i+1)$-th slot onward, there is no need to compute their marginal gains, and they can be skipped safely. This improves the execution time, though the worst-case time complexity will remain the same. 

\paragraph{\textbf{Stochastic Greedy Algorithm.}}
In this approach \cite{mirzasoleiman2015lazier} in each iteration instead of computing the marginal gains of all the remaining elements, we sample $\frac{n}{k} \log \frac{1}{\epsilon}$ many elements from the ground set for slot selection and $\frac{n}{\ell} \log \frac{1}{\epsilon}$ many elements for tag selection. The marginal gain is computed only for the sampled elements. Here, we mention that $\epsilon$ is a control parameter that controls the trade-off between the quality of the solution and the execution time. Algorithm \ref{Algo:4} describes this process as pseudo-code.
\begin{algorithm}[h!]
\scriptsize
 \KwData{The Trajectory Database $\mathcal{D}$, The Billboard Database $\mathcal{B}$, Context Specific Influence Probabilities, Two Positive Integers $k$ and $\ell$.}
 \KwResult{$\mathcal{S} \subseteq V(G)$ with $|\mathcal{S}|=k$ and $\mathcal{H}^{'} \subseteq \mathcal{H}$ with $|\mathcal{H}^{'}|=\ell$ such that $\phi(\mathcal{S}, \mathcal{H}^{'})$ is maximized.}
 $\mathcal{S}^{'} \longleftarrow \{s^{'}\}$, $\mathcal{S}^{''} \longleftarrow \{s^{'}\}$, $\mathcal{H}^{'} \longleftarrow \{h^{'}\}$, $\mathcal{H}^{''} \longleftarrow \{h^{'}\}$\;
 $\mathcal{R} \longleftarrow \emptyset$, $\epsilon \longleftarrow 0.01$\;
 \For{$i=1 \text{ to }k$}{
 $\mathcal{R} \longleftarrow \text{Sample }\frac{a}{k} \log \frac{1}{\epsilon}\text{ many elements from }\mathcal{BS} \setminus \mathcal{S}^{'}$\;
 $s^{*} \longleftarrow \underset{s \in \mathcal{R}}{argmax} \ \phi(\mathcal{S}^{'} \cup \{s\},\{h^{'}\}) - \phi(\mathcal{S}^{'},\mathcal{H}^{'})$\;
 $\mathcal{S}^{'} \longleftarrow \mathcal{S}^{'} \cup \{s^{*}\}$\;
 }
 \For{$i=1 \text{ to }\ell$}{
 $\mathcal{R} \longleftarrow \text{Sample }\frac{b}{\ell} \log \frac{1}{\epsilon}\text{ many elements from }\mathcal{H} \setminus \mathcal{H}^{'}$\;
 $h^{*} \longleftarrow \underset{h \in \mathcal{R}}{argmax} \ \phi(\mathcal{S}^{'},\mathcal{H}^{'} \cup \{h\}) - \phi(\mathcal{S}^{'},\mathcal{H}^{'})$\;
 $\mathcal{H}^{''} \longleftarrow \mathcal{H}^{''} \cup \{h^{*}\}$\;
 }
 
 \For{$i=1 \text{ to }\ell$}{
 $\mathcal{R} \longleftarrow \text{Sample }\frac{b}{\ell} \log \frac{1}{\epsilon}\text{ many elements from }\mathcal{H} \setminus \mathcal{H}^{''}$\;
 $h^{*} \longleftarrow \underset{h \in \mathcal{R}}{argmax} \ \phi(\mathcal{S}^{''},\mathcal{H}^{''} \cup \{h\}) - \phi(\mathcal{S}^{''},\mathcal{H}^{''})$\;
 $\mathcal{H}^{''} \longleftarrow \mathcal{H}^{''} \cup \{h^{*}\}$\;
 } 
 
\For{$i=1 \text{ to }k$}{
$\mathcal{R} \longleftarrow \text{Sample }\frac{a}{k} \log \frac{1}{\epsilon}\text{ many elements from }\mathcal{BS} \setminus \mathcal{S}^{''}$\;
 $s^{*} \longleftarrow \underset{s \in \mathcal{R}}{argmax} \ \phi(\mathcal{S}^{''} \cup \{s\},\mathcal{H}^{''}) - \phi(\mathcal{S}^{''},\mathcal{H}^{''})$\;
 $\mathcal{S}^{''} \longleftarrow \mathcal{S}^{''} \cup \{s^{*}\}$\;
 } 
 \eIf{$\phi(\mathcal{S}^{'},\mathcal{H}^{'}) > \phi(\mathcal{S}^{''},\mathcal{H}^{''})$}{
$return \ \mathcal{S}^{'} \text{ and } \mathcal{H}^{'}$\;
 }
 {
 $return \ \mathcal{S}^{''} \text{ and } \mathcal{H}^{''}$
 }
 \caption{Stochastic Greedy Algorithm for the Influential Slots and Tags Selection Problem}
 \label{Algo:4}
\end{algorithm}

\par \textbf{Complexity Analysis.} Now, we analyze the time and space requirements for Algorithm \ref{Algo:4}. Initialization at Line No. $1$ and $2$ will take $\mathcal{O}(1)$ time. To sample out $\frac{a}{k} \log \frac{1}{\epsilon}$ many element it will take $\mathcal{O}(a. \log \frac{1}{\epsilon})$ time. Now, for any billboard slot $s \in \mathcal{BS}$ and $l \in \mathcal{H}$, calculating influence using equation \ref{Eq:Equation_Inf} will take $\mathcal{O}(t)$ time, in which $t$ is the number of tuple in the trajectory database. In Line No. $5$ computing marginal gain will take $\mathcal{O}(2.a. \log \frac{1}{\epsilon}.t)$ time and Line No. $6$ will execute for $\mathcal{O}(k)$ time. So, Line No. $3$ to $6$ will take $\mathcal{O}(a. \log \frac{1}{\epsilon} + 2.a. \log \frac{1}{\epsilon}.t + k)$ time. In Line No. $7$ to $10$ will take $\mathcal{O}(b. \log \frac{1}{\epsilon} + 2.b. \log \frac{1}{\epsilon}.k.t + \ell)$ time and Line No. $11$ to $14$ will take $\mathcal{O}(b. \log \frac{1}{\epsilon} + 2.b. \log \frac{1}{\epsilon}.\ell.t + \ell)$. In the fourth greedy time taken by Line No. $15$ to $18$ is of $\mathcal{O}(a. \log \frac{1}{\epsilon} + 2.a. \log \frac{1}{\epsilon}.\ell.t + k)$. Finally, Line No. $19$ to $22$ will take $\mathcal{O}(2.k.\ell.t)$ time for final comparison. Hence, total time requirement of Algorithm \ref{Algo:4} will be $\mathcal{O}(a. \log \frac{1}{\epsilon}.\ell.t + b. \log \frac{1}{\epsilon}.k.t + k.\ell.t)$. Now, the additional space requirement to store the lists $\mathcal{S}^{'}, \mathcal{S}^{''}, \mathcal{H}^{'}$, $ \mathcal{H}^{''}$ and $\mathcal{R}$ will be $\mathcal{O}(k),\mathcal{O}(k), \mathcal{O}(\ell)$, $ \mathcal{O}(\ell)$ and $\mathcal{O}(max(a. \log \frac{1}{\epsilon},b.\log \frac{1}{\epsilon}))$ respectively. Hence, total space requirement for Algorithm \ref{Algo:4} will be of $\mathcal{O}(max(a. \log \frac{1}{\epsilon},b.\log \frac{1}{\epsilon}) + 2k+2\ell)$. 

\par Now, we analyze this methodology and prove some theoretical results. 
\begin{lemma} \label{lemma:4}
The number of influence function evaluations by Algorithm \ref{Algo:4} will be equal to $4(a+b) \log \frac{1}{\epsilon}$, i.e.,  $\mathcal{O}((a+b) \log \frac{1}{\epsilon})$.
\end{lemma}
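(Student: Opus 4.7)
The plan is to prove this by a direct accounting of the function evaluations in each of the four loops of Algorithm \ref{Algo:4}. Since the bound is exact (equality, not just $\mathcal{O}$), I would structure the argument as a bookkeeping exercise: first count the total number of sampled elements across all iterations of each loop, then multiply by the number of influence function evaluations needed per sample (which is exactly two, since each marginal gain of the form $\phi(\mathcal{S} \cup \{s\}, H) - \phi(\mathcal{S}, H)$ requires one evaluation of the augmented-set influence and one evaluation of the current-set influence).

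First I would handle Loop~1 (Lines 3--6). It runs for $k$ iterations, and in each iteration the set $\mathcal{R}$ is populated with $\frac{a}{k} \log \frac{1}{\epsilon}$ sampled elements from $\mathcal{BS} \setminus \mathcal{S}^{'}$. For each sampled slot $s \in \mathcal{R}$, Line~5 computes a marginal gain requiring two evaluations of $\phi$. Hence the total number of evaluations contributed by Loop~1 is
\begin{equation*}
k \cdot \frac{a}{k} \log \frac{1}{\epsilon} \cdot 2 \;=\; 2a \log \frac{1}{\epsilon}.
\end{equation*}
By an identical count, Loop~2 (Lines 7--10) contributes $\ell \cdot \frac{b}{\ell} \log \frac{1}{\epsilon} \cdot 2 = 2b \log \frac{1}{\epsilon}$ evaluations, Loop~3 (Lines 11--14) contributes $2b \log \frac{1}{\epsilon}$, and Loop~4 (Lines 15--18) contributes $2a \log \frac{1}{\epsilon}$.

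Summing the four contributions yields $2a \log \frac{1}{\epsilon} + 2b \log \frac{1}{\epsilon} + 2b \log \frac{1}{\epsilon} + 2a \log \frac{1}{\epsilon} = 4(a+b) \log \frac{1}{\epsilon}$, which is the claimed bound. The only remaining evaluations in the algorithm appear in the final comparison at Lines 19--22; however, these compute $\phi(\mathcal{S}^{'}, \mathcal{H}^{'})$ and $\phi(\mathcal{S}^{''}, \mathcal{H}^{''})$ on the final returned sets, which can be tracked incrementally from the last iterations of the greedy loops and therefore do not add to the asymptotic count (and strictly speaking are excluded when the lemma is stated in the marginal-gain accounting convention used by stochastic greedy analyses \cite{mirzasoleiman2015lazier}). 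The main obstacle is not mathematical difficulty but a modeling choice: I need to state up front that each marginal-gain evaluation counts as two influence queries and that the current influence $\phi(\mathcal{S}^{'}, \mathcal{H}^{'})$ is recomputed inside the argmax rather than cached, so that the count matches the exact constant $4(a+b) \log \frac{1}{\epsilon}$ instead of $2(a+b) \log \frac{1}{\epsilon}$. Fixing this convention once at the start of the proof then makes the remaining argument a pure arithmetic sum.
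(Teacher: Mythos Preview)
Your proposal is correct and follows essentially the same bookkeeping as the paper: the paper does not give a standalone proof for Lemma~\ref{lemma:4}, but its preceding Complexity Analysis paragraph performs exactly this per-loop accounting (e.g., writing the cost of Line~5 across the first loop as $\mathcal{O}(2\cdot a\cdot \log\frac{1}{\epsilon}\cdot t)$), from which the $4(a+b)\log\frac{1}{\epsilon}$ total follows immediately. Your explicit handling of the ``two evaluations per marginal gain'' convention and the final comparison at Lines~19--22 is, if anything, more careful than what the paper spells out.
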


\begin{lemma} \label{Lemma:5}
Consider the first \texttt{for loop} of Algorithm \ref{Algo:4} and assume that after the execution of its $i$-th iteration, the solution set is $\mathcal{S}^{'}_{i}$. The expected influence gain of Algorithm \ref{Algo:4} in the $(i+1)$-th will be at least $\frac{1-\epsilon}{k} \underset{s^{*} \in \mathcal{S}^{OPT} \setminus \mathcal{S}^{'}}{\sum} \phi(\mathcal{S}^{'} \cup \{s^{*}\},\{\mathcal{H}^{'}\})- \phi(\mathcal{S}^{'},\mathcal{H}^{'})$.
\end{lemma}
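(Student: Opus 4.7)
The plan is to adapt the standard stochastic-greedy analysis of Mirzasoleiman et al.~\cite{mirzasoleiman2015lazier} to the slot coordinate, with the tag component frozen at $\mathcal{H}' = \{h'\}$ so that the one-coordinate submodularity of $\phi$ (inherited from the bi-submodularity) suffices. Throughout, let $\mathcal{V}_i := \mathcal{S}^{OPT}\setminus \mathcal{S}'_i$ denote the residual optimal slots, $n'_i := |\mathcal{BS}\setminus \mathcal{S}'_i|$, $r := (a/k)\log(1/\epsilon)$ the sample size used in Line 4, and $\Delta(s) := \phi(\mathcal{S}'_i\cup\{s\},\mathcal{H}') - \phi(\mathcal{S}'_i,\mathcal{H}')$ the marginal gain. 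The argument has two main ingredients: a lower bound on the probability that $\mathcal{R}$ hits $\mathcal{V}_i$, and a symmetry argument that converts ``hits $\mathcal{V}_i$'' into an average marginal gain over $\mathcal{V}_i$.

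For the first ingredient, since $\mathcal{R}$ is drawn uniformly without replacement from $\mathcal{BS}\setminus \mathcal{S}'_i$, a direct hypergeometric estimate followed by the standard bound $(1-x)^r \leq e^{-xr}$ gives
\begin{equation*}
\Pr[\mathcal{R}\cap\mathcal{V}_i=\emptyset] \leq \Bigl(1-\frac{|\mathcal{V}_i|}{n'_i}\Bigr)^{r} \leq e^{-r|\mathcal{V}_i|/n'_i} \leq \epsilon^{|\mathcal{V}_i|/k},
\end{equation*}
where the last inequality uses $n'_i \leq a$ together with $r = (a/k)\log(1/\epsilon)$.

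For the second ingredient, condition on $\mathcal{R}\cap\mathcal{V}_i\neq\emptyset$ and pick $v^{\star}$ uniformly at random from $\mathcal{R}\cap\mathcal{V}_i$. By the exchangeability of uniform sampling without replacement, every $v\in\mathcal{V}_i$ is equally likely to be $v^{\star}$, so $v^{\star}$ is distributed uniformly on $\mathcal{V}_i$. Because the algorithm selects the element of $\mathcal{R}$ with the largest marginal gain and $v^{\star}\in\mathcal{R}$, the chosen $s^{*}$ satisfies $\Delta(s^{*})\geq \Delta(v^{\star})$ whenever the intersection is non-empty, while $\Delta(s^{*})\geq 0$ always by monotonicity. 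Taking expectations and combining with the first ingredient,
\begin{equation*}
E[\Delta(s^{*})] \geq \Pr[\mathcal{R}\cap\mathcal{V}_i\neq\emptyset]\cdot\frac{1}{|\mathcal{V}_i|}\sum_{v\in\mathcal{V}_i}\Delta(v) \geq \bigl(1-\epsilon^{|\mathcal{V}_i|/k}\bigr)\cdot\frac{1}{|\mathcal{V}_i|}\sum_{v\in\mathcal{V}_i}\Delta(v).
\end{equation*}
A final application of the elementary inequality $1-\epsilon^{x} \geq (1-\epsilon)\,x$ for $x\in[0,1]$ and $\epsilon\in(0,1)$, with $x=|\mathcal{V}_i|/k\in(0,1]$, converts the prefactor to $(1-\epsilon)/k$ and yields the claimed bound.

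The main obstacle is the symmetry step: one must carefully justify that a uniformly random element of $\mathcal{R}\cap\mathcal{V}_i$ (conditioned on non-emptiness) is uniformly distributed on $\mathcal{V}_i$, which follows from the fact that each $v\in\mathcal{V}_i$ enters $\mathcal{R}$ with the same marginal probability and that the intersection is exchangeable; care is needed to avoid double-counting when $|\mathcal{R}\cap\mathcal{V}_i|\geq 2$. The supporting computations are routine: the hypergeometric tail bound is immediate from the factor-wise inequality $(n'_i-|\mathcal{V}_i|-j)/(n'_i-j)\leq 1-|\mathcal{V}_i|/n'_i$, and the elementary inequality $1-\epsilon^{x}\geq (1-\epsilon)x$ on $[0,1]$ follows from checking that $h(x):=1-x+x\epsilon-\epsilon^{x}$ is concave on $[0,1]$ with $h(0)=h(1)=0$.
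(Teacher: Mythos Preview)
The paper states Lemma~\ref{Lemma:5} without proof, relying implicitly on the stochastic-greedy analysis of Mirzasoleiman et al.~\cite{mirzasoleiman2015lazier} to which Algorithm~\ref{Algo:4} is attributed. Your argument is precisely that canonical analysis, correctly specialised to the slot coordinate with the tag set held fixed: the hypergeometric miss-probability bound, the exchangeability argument that a uniformly random element of $\mathcal{R}\cap\mathcal{V}_i$ is uniform on $\mathcal{V}_i$, and the concavity inequality $1-\epsilon^{x}\geq (1-\epsilon)x$ on $[0,1]$ are all the standard steps and are carried out correctly. There is nothing to compare against in the paper itself, and your proof is the expected one.
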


\begin{theorem} \label{Th:3}
Let $\mathcal{S}^{OPT}$ and $\mathcal{H}^{OPT}$ be an optimal $k$-sized and an $\ell$-sized slot and tag set, respectively. Also assume $\mathcal{S}^{\mathcal{A}}$ and $\mathcal{H}^{\mathcal{A}}$ are the $k$-sized and an $\ell$-sized slot and tag set returned by Algorithm \ref{Algo:4}. Then $\phi(\mathcal{S}^{\mathcal{A}}, \mathcal{H}^{\mathcal{A}}) \geq (1-\frac{1}{e} - \epsilon)^{2} \cdot \phi(\mathcal{S}^{OPT}, \mathcal{H}^{OPT})$. In other words, Algorithm \ref{Algo:4} gives $(1-\frac{1}{e} - \epsilon)^{2}$ factor approximation guarantee.
\end{theorem}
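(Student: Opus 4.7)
The plan is to leverage Lemma~\ref{Lemma:5} to carry out the standard Mirzasoleiman-type stochastic greedy analysis in each of the four coordinate-wise \texttt{for loops}, and then compose the two alternating executions via bi-submodularity and bi-monotonicity to lift the per-coordinate $(1-\tfrac{1}{e}-\epsilon)$ bound to the claimed $(1-\tfrac{1}{e}-\epsilon)^2$. The returned solution is the better of $(\mathcal{S}^{'},\mathcal{H}^{'})$ and $(\mathcal{S}^{''},\mathcal{H}^{''})$, so it suffices to show that at least one of these two pairs attains the squared ratio.

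For the single-coordinate part, I first fix the tag set to $\{h^{'}\}$ and consider the first \texttt{for loop}. Let $\mathcal{S}^{'}_i$ denote the slot set after iteration $i$ and let $\mathcal{S}^{OPT}_{h'}$ be the optimal $k$-slot set for tag $\{h^{'}\}$. Lemma~\ref{Lemma:5}, together with monotonicity of $\phi(\cdot,\{h^{'}\})$, yields
\begin{equation*}
\mathbb{E}\bigl[\phi(\mathcal{S}^{'}_{i+1},\{h^{'}\})\bigr]-\phi(\mathcal{S}^{'}_{i},\{h^{'}\}) \;\geq\; \tfrac{1-\epsilon}{k}\bigl[\phi(\mathcal{S}^{OPT}_{h'},\{h^{'}\})-\phi(\mathcal{S}^{'}_{i},\{h^{'}\})\bigr].
\end{equation*}
Unrolling this recurrence over $k$ iterations and using the bound $(1-\tfrac{1-\epsilon}{k})^k\leq e^{-(1-\epsilon)}\leq \tfrac{1}{e}+\epsilon$ gives $\mathbb{E}[\phi(\mathcal{S}^{'},\{h^{'}\})]\geq (1-\tfrac{1}{e}-\epsilon)\,\phi(\mathcal{S}^{OPT}_{h'},\{h^{'}\})$. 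The identical argument, applied verbatim to the other three loops (replacing the role of $k$ by $\ell$ when selecting tags, and replacing the fixed coordinate appropriately), delivers an analogous $(1-\tfrac{1}{e}-\epsilon)$ guarantee against the best choice with the \emph{other} coordinate held fixed.

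Next I would stitch together the two consecutive loops that produce $(\mathcal{S}^{''},\mathcal{H}^{''})$. Applying the single-coordinate result to the fourth loop, with tags pinned at $\mathcal{H}^{''}$, gives $\mathbb{E}[\phi(\mathcal{S}^{''},\mathcal{H}^{''})]\geq (1-\tfrac{1}{e}-\epsilon)\,\phi(\mathcal{S}^{OPT},\mathcal{H}^{''})$, since the greedy's maximum over $k$-sized slot sets dominates the choice $\mathcal{S}^{OPT}$. To replace $\mathcal{H}^{''}$ by $\mathcal{H}^{OPT}$ on the right, I use the third loop: it approximates $\phi(\{s^{'}\},\mathcal{H}^{OPT})$ up to $(1-\tfrac{1}{e}-\epsilon)$, and bi-monotonicity lifts this from $\{s^{'}\}$ to $\mathcal{S}^{OPT}$ while bi-submodularity controls the loss incurred when the slot coordinate is enlarged (the marginal contribution of any tag at slots $\mathcal{S}^{OPT}$ is at most its contribution at $\{s^{'}\}$, so $\phi(\mathcal{S}^{OPT},\mathcal{H}^{''})\geq (1-\tfrac{1}{e}-\epsilon)\,\phi(\mathcal{S}^{OPT},\mathcal{H}^{OPT})$). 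Multiplying the two factors yields $(1-\tfrac{1}{e}-\epsilon)^2\,\phi(\mathcal{S}^{OPT},\mathcal{H}^{OPT})$ for the $(\mathcal{S}^{''},\mathcal{H}^{''})$ branch; an entirely symmetric argument does the same for $(\mathcal{S}^{'},\mathcal{H}^{'})$. Since the algorithm returns the better of the two, the claim follows.

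The main obstacle is the last composition step, namely transferring the single-coordinate approximation that the third loop proves for slots $\{s^{'}\}$ into a statement about slots $\mathcal{S}^{OPT}$. The greedy was never optimized for the eventual slot coordinate, so without the bi-submodular decay of marginals the combined analysis collapses. I expect to spend most of the work there: carefully setting up the two-sided diminishing returns so that enlarging the fixed coordinate can only shrink, not invert, the per-iteration gap used in Lemma~\ref{Lemma:5}, and verifying that the ``take the better of two runs'' step indeed covers both regimes (tag-dominated and slot-dominated optima) so that the product $(1-\tfrac{1}{e}-\epsilon)^2$ is achieved in expectation.
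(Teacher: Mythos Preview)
Your proposal is correct and follows essentially the same route as the paper: invoke Lemma~\ref{Lemma:5} inside each of the four \texttt{for loops}, combine it with submodularity to obtain the standard stochastic-greedy recurrence, unroll to a $(1-\tfrac{1}{e}-\epsilon)$ factor per coordinate, and then multiply across the two coordinates to reach $(1-\tfrac{1}{e}-\epsilon)^2$ for each of the two branches. The paper's own proof does exactly this but is terser---it simply writes down the chained inequalities for the second, third, and fourth loops ``in a similar way'' without isolating the composition step; your write-up is actually more careful in flagging that lifting the third-loop guarantee from $\{s^{'}\}$ to $\mathcal{S}^{OPT}$ is where bi-submodularity and bi-monotonicity must do real work, a point the paper leaves implicit.
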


\begin{proof}
It can be observed that any one of the following two cases may happen. \textbf{Case I:} $\mathcal{S}^{\mathcal{A}}=\mathcal{S}^{'}$, $\mathcal{H}^{\mathcal{A}}=\mathcal{H}^{'}$ and \textbf{Case II:} $\mathcal{S}^{\mathcal{A}}=\mathcal{S}^{''}$, $\mathcal{H}^{\mathcal{A}}=\mathcal{H}^{''}$

Let, $\mathcal{S}^{'}_{i} = \{s_{1}, s_{2}, s_{3},\ldots s_{i}\}$ defines the solutions at each step  returns by first \texttt{For Loop} in Algorithm \ref{Algo:4} after $i^{th}$ iteration. Now, from lemma \ref{Lemma:5} we can write, 
\begin{align}
E[\Delta(s_{i+1}|\mathcal{S}^{'},\mathcal{H}^{'})|\mathcal{S}^{'},\mathcal{H}^{'}] \geq \frac{1-\epsilon}{k} \underset{s^{*} \in \mathcal{S}^{OPT} \setminus \mathcal{S}^{'}}{\sum}\Delta({s}^{*} | \mathcal{S}^{'},\mathcal{H}^{'})
\end{align}
Using the submodularity property, we can obtain,
\begin{align}
\underset{s^{*} \in \mathcal{S}^{OPT} \setminus \mathcal{S}^{'}}{\sum}\Delta({s}^{*} | \mathcal{S}^{'},\mathcal{H}^{'}) & \geq \Delta(\mathcal{S}^{OPT} | \mathcal{S}^{'}_{i}, \mathcal{H}^{'})\nonumber \\ & \geq \phi(\mathcal{S}^{OPT}, \mathcal{H}^{'})- \phi(\mathcal{S}^{'}_{i}, \mathcal{H}^{'}) \nonumber
\end{align}
Now, if we put these results in Equation (14), we get,

\begin{align}
E[\phi(\mathcal{S}^{'}_{i+1},\mathcal{H}^{'})-\phi(\mathcal{S}^{'}_{i},\mathcal{H}^{'})|\mathcal{S}^{'}_{i}] & \geq \frac{1-\epsilon}{k}\phi(\mathcal{S}^{OPT}, \mathcal{H}^{'})- \phi(\mathcal{S}^{'}_{i}, \mathcal{H}^{'}) \nonumber
\end{align}

Now, if we take expectation over $\mathcal{S}^{'}_{i}$, we can obtain,
\begin{align}
E[\phi(\mathcal{S}^{'}_{i+1},\mathcal{H}^{'})-\phi(\mathcal{S}^{'}_{i},\mathcal{H}^{'})] = \frac{1-\epsilon}{k}\phi(\mathcal{S}^{OPT}, \mathcal{H}^{'})- \phi(\mathcal{S}^{'}_{i}, \mathcal{H}^{'}) \nonumber]
\end{align}
If we apply induction to it,
\begin{align}
E[\phi(\mathcal{S}^{'}_{k}, \mathcal{H}^{'})] \geq (1-(1-\frac{1-\epsilon}{k})^{k}).\phi (\mathcal{S}^{OPT},\mathcal{H}^{\mathcal{A}})\nonumber & \\ \geq (1-\frac{1}{e}-\epsilon). \phi (\mathcal{S}^{OPT},\mathcal{H}^{\mathcal{A}})
\end{align}
Now, in a similar way, for the second, third, and fourth \texttt{For Loop}, we can write:
\begin{align}
E[\phi(\mathcal{S}^{'}_{k}, \mathcal{H}^{'}_{\ell})] \geq (1-\frac{1}{e}-\epsilon)^{2}. \phi (\mathcal{S}^{OPT},\mathcal{H}^{OPT})
\end{align}

\begin{align}
E[\phi(\mathcal{S}^{''}, \mathcal{H}^{''}_{\ell})] \geq (1-\frac{1}{e}-\epsilon). \phi (\mathcal{S}^{\mathcal{A}},\mathcal{H}^{OPT})
\end{align}

\begin{align}
E[\phi(\mathcal{S}^{''}_{k}, \mathcal{H}^{''}_{\ell})] \geq (1-\frac{1}{e}-\epsilon)^{2}. \phi (\mathcal{S}^{OPT},\mathcal{H}^{OPT})
\end{align}
\end{proof}

\section{Experimental Evaluations} \label{Sec:EE}
This section describes the experimental evaluations of the proposed solution approaches. Initially, we start by describing the datasets used in our experiments.
\paragraph{\textbf{Dataset Description.}}
We use two widely studied datasets for our experiments \cite{ali2023influential,zhang2020towards}. The first dataset includes 227,428 check-in records from New York City\footnote{\url{https://www.nyc.gov/site/tlc/about/tlc-trip-record-data.page}}, collected over ten months (April 12, 2012–February 16, 2013), with details like timestamps, GPS coordinates, and user IDs. The second dataset, VehDS-LA\footnote{\url{https://github.com/Ibtihal-Alablani}}, contains 74,170 vehicle records from 15 streets in Los Angeles, featuring street names, GPS coordinates, and timestamps. Additionally, billboard data from LAMAR\footnote{\url{http://www.lamar.com/InventoryBrowser}} includes billboard ID, venue ID, GPS coordinates, timestamps, and panel size. The New York City dataset has 716 billboards (1,031,040 slots), and Los Angeles has 1,483 billboards (2,135,520 slots).

\paragraph{\textbf{Key Parameters.}}
All the parameters are summarized in Table \ref{Table-2}, including the number of billboard slots $k$ and tags $\ell$ to be picked. The user-defined parameter $\epsilon$ defines the size of random subsets. The distance threshold, $\lambda$, determines the maximum distance a billboard can influence the trajectories. In each experiment, we fixed one parameter value and varied the other parameter values. All codes are executed in Python using Jupyter Notebook in an HP Z4 workstation with 64 GB of memory and an Xeon(R) 3.50 GHz processor.
\vspace{-0.3in}
\begin{table}[h!]
\begin{center}
\caption{\label{Table-2} Parameter Settings}
%\vspace{-0.1 in}
    \begin{tabular}{ | p{2cm}| p{5cm}|}
    \hline
    Parameter & Values  \\ \hline
    $k$ & $25, 50, 100, 150, 200$   \\ \hline
    $\ell$ & $10, 20, 30, 40, 50$  \\ \hline
    $\epsilon$ & $0.01, 0.05, 0.1, 0.15, 0.2$ \\ \hline
    $\lambda$ & $25m, 50m, 75m, 100m, 125m$  \\ \hline
    \end{tabular}
\end{center}
\end{table}
\vspace{-0.4in}

\paragraph{\textbf{Baseline Methodologies.}}
We compared our proposed solutions with the following baseline methods:

\par \textbf{Random Slot and Random Tag (RSRT):} In this method, $k$ many random slots and $\ell$ many random tags are chosen and returned as solution.
\par \textbf{Random Slot and High-Frequency Tag(RSHFT):} Tag frequency is defined by the number of associated people. We count and sort tags by frequency, then return $\ell$ tags and $k$ random slots from the sorted list.
\par \textbf{Maximum Coverage Slot and Random Tag (MAXSRT):} The coverage of a billboard slot is the number of people passing by it. We compute and sort the coverage for all slots, then return $k$ slots from the sorted list and select $\ell$ tags uniformly at random.
\par \textbf{Top-$k$ Slot and Top-$\ell$ Tag (TSTT):} This method calculates the individual influence of each billboard slot and tag, then sorts both in descending order. From the sorted lists, we select the Top-$k$ billboard slots and Top-$\ell$ tags.

\par \textbf{Top-$k$ Slot and Random $\ell$ Tag (TSRT):} In this method, the influence of each billboard slot is calculated, and the slots are sorted in descending order, with the Top-$k$ selected. For tags, $\ell$ is randomly chosen from the unsorted list.

\textbf{Random $k$ Slot and Top-$\ell$ Tag (RSTT):} This method is the reverse of the TSRT approach. First, the influence of each billboard slot and tag is calculated. Tags are then sorted in descending order by influence, and the Top-$\ell$ tags are selected. From the billboard slots, $k$ are randomly chosen.
\paragraph{\textbf{Goals of our Experiments.}} \label{exp_goal}
In this study, we address the following Research Questions (RQ).
\begin{itemize}
\item \textbf{RQ1}: How does the influence value increase if we increase the number of slots and tags to be selected?
\item \textbf{RQ2}: If we increase the number of slots and tags, how do the computational time requirements of the proposed and the baseline methods change?
\item \textbf{RQ3}: If we increase the size of the trajectory, how do the proposed method's influence value and computational time requirement change?  
\item \textbf{RQ4}: For the stochastic greedy algorithm, if we change the value of $\epsilon$, how do the computational time and the quality of the solution change?
\end{itemize}
\subsection{Experimental Results with Discussions}
In this section, we describe the experimental results and answer each research question posed in this work.
\paragraph{\textbf{Budget $(k,\ell)$ Vs. Influence.}} 
Budget and influence are critical factors in billboard advertisement decisions. In our experiment, we analyzed the influence of varying billboard slots $(25, 50, 100, 150,$ and $200)$ for different tag values $\ell$, shown in Figure \ref{Fig:Budget_Vs_INF_Time}. The influence probability of `tags' in the NYC dataset is unevenly distributed, with a few tags being highly influential while most are not. This distribution favors algorithms like `Lazy Greedy', `Stochastic Greedy', and baseline methods such as `TSTT', `RSTT', and `RSHFT'. In contrast, `MAXSRT' and `TSRT' underperform due to random tag selection. Conversely, the LA dataset exhibits a more balanced influence distribution, leading to better performance for `TSRT' and `RSTT'. In the LA dataset among the baseline methods, `TSTT' has almost equal influence to `Stochastic Greedy'. On the other hand, in the NYC dataset, the influence probability of billboard slots is well distributed, and the influence difference between `Stochastic Greedy' and `TSTT' is differentiable, as shown in Figure \ref{Fig:Budget_Vs_INF_Time} $(a,b,c,d,e)$. Now, when we increase the number of billboard slot from $25$ to $200$ with a fixed value of $\ell = 10$, $\epsilon =0.01$, the influence value of `Lazy Greedy', `Stochastic Greedy', `TSTT', `RSTT', `RSHFT', `MAXSRT', and `TSRT' are increases from $353.74$, $353.36$, $352.27$, $339.50$, $265.43$, $50.93$, $19.71$ to $437.55$, $434.21$, $414.16$, $376.49$, $297.14$, $84.25$, $56.19$ respectively. Similarly, if we fixed the number of billboard slot, $k = 200$ and vary $\ell$ from $10$ to $50$ then the influence value of `Lazy Greedy', `Stochastic Greedy', `TSTT', `RSTT', `RSHFT', `MAXSRT', and `TSRT' increase from $437.55$, $434.21$, $414.16$, $376.49$, $297.14$, $84.25$, $56.19$ to $651.45$, $641.69$, $617.26$, $583.42$, $577.81$, $338.21$, $294.46$ respectively. Similar types of observations were also observed in the LA dataset. Therefore, among the proposed two methods, `Lazy Greedy' gives more influence compared to `Stochastic Greedy' because of the randomized element selection behavior of `Stochastic Greedy', and `TSTT' gives maximum influence among other baseline methods for both LA and NYC datasets as reported in Figure \ref{Fig:Budget_Vs_INF_Time}.  

\begin{figure*}[h!]
\centering
\begin{tabular}{cccc}
\includegraphics[width=0.29\linewidth]{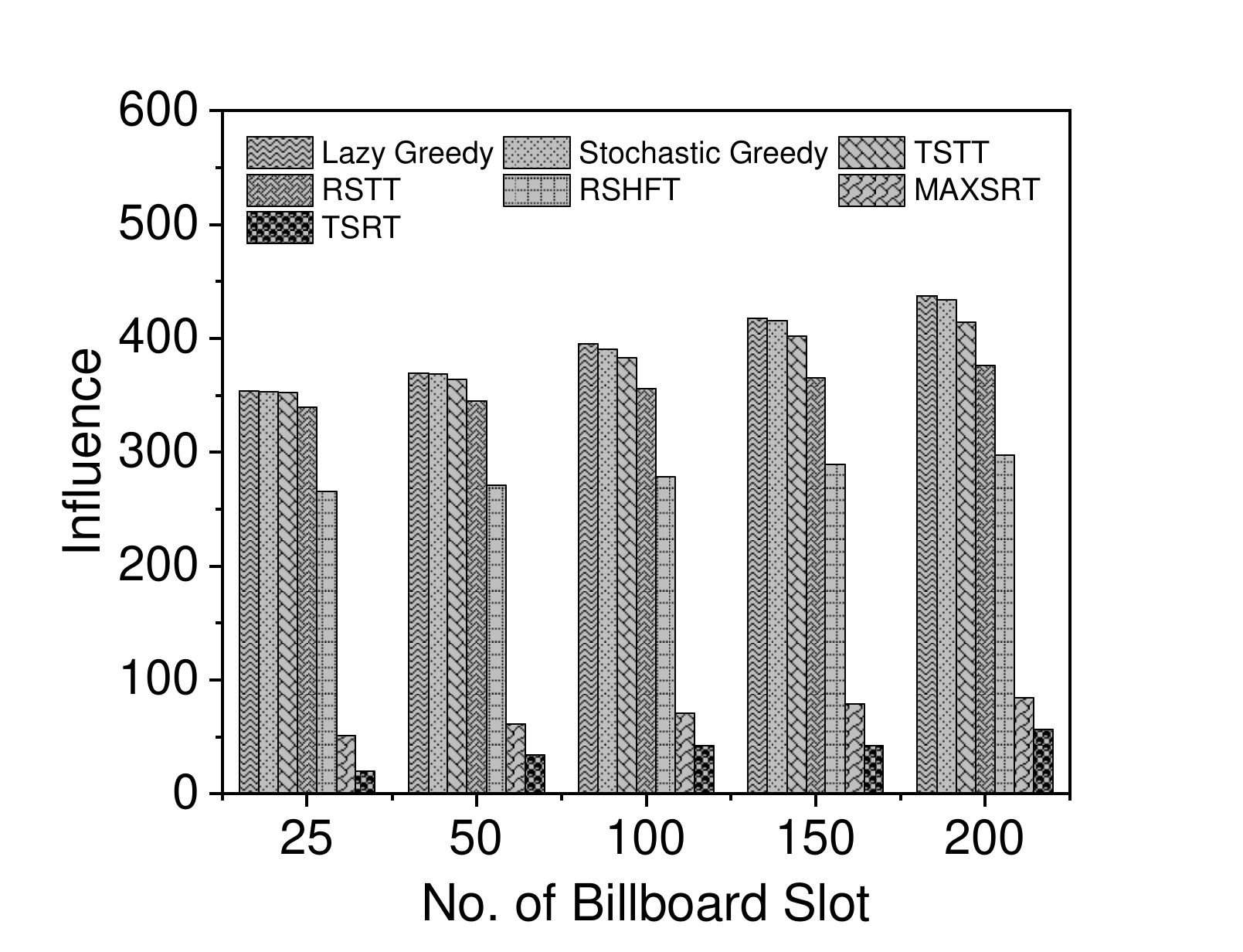}\hspace{-2em} & 
\includegraphics[width=0.29\linewidth]{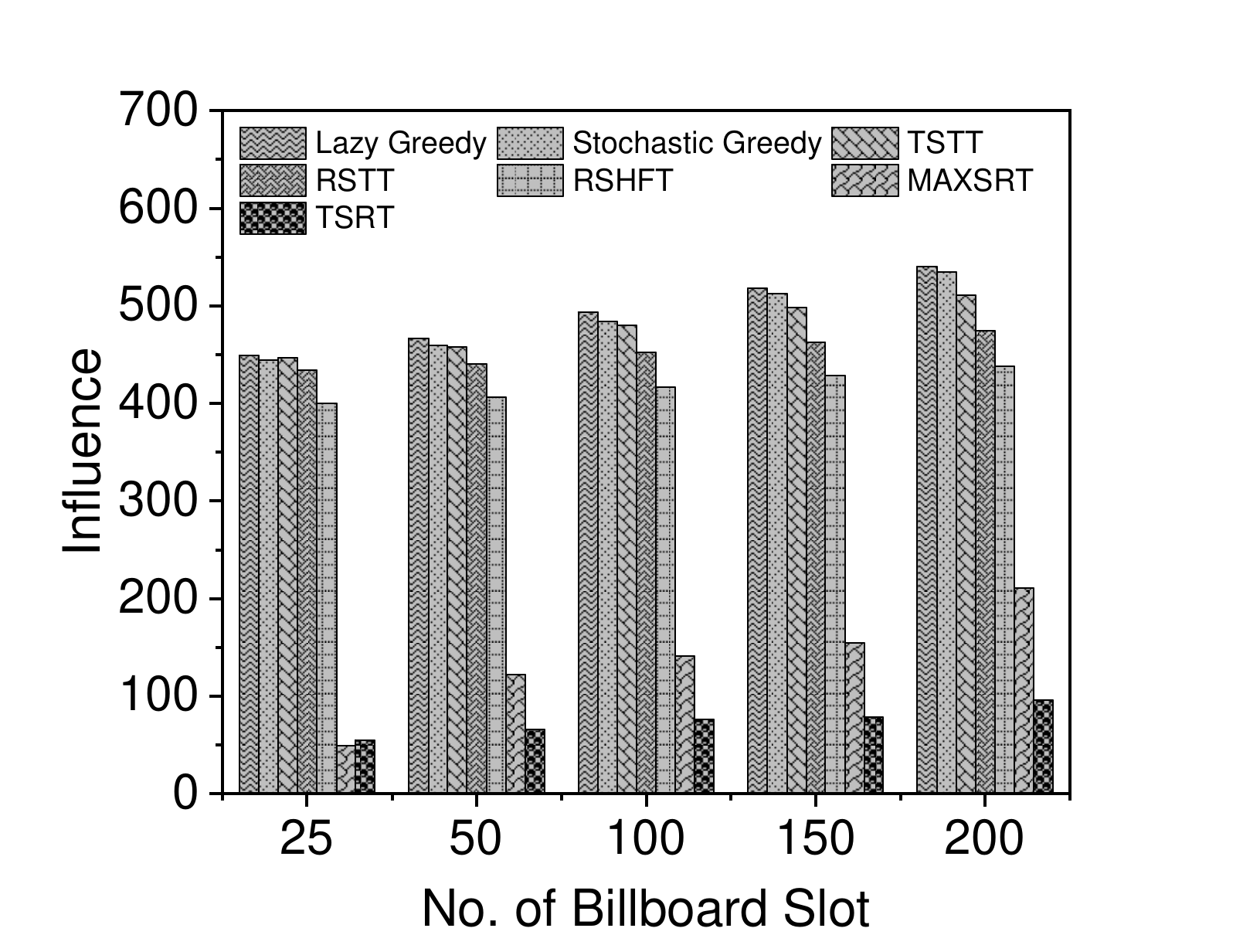}\hspace{-2em} & 
\includegraphics[width=0.29\linewidth]{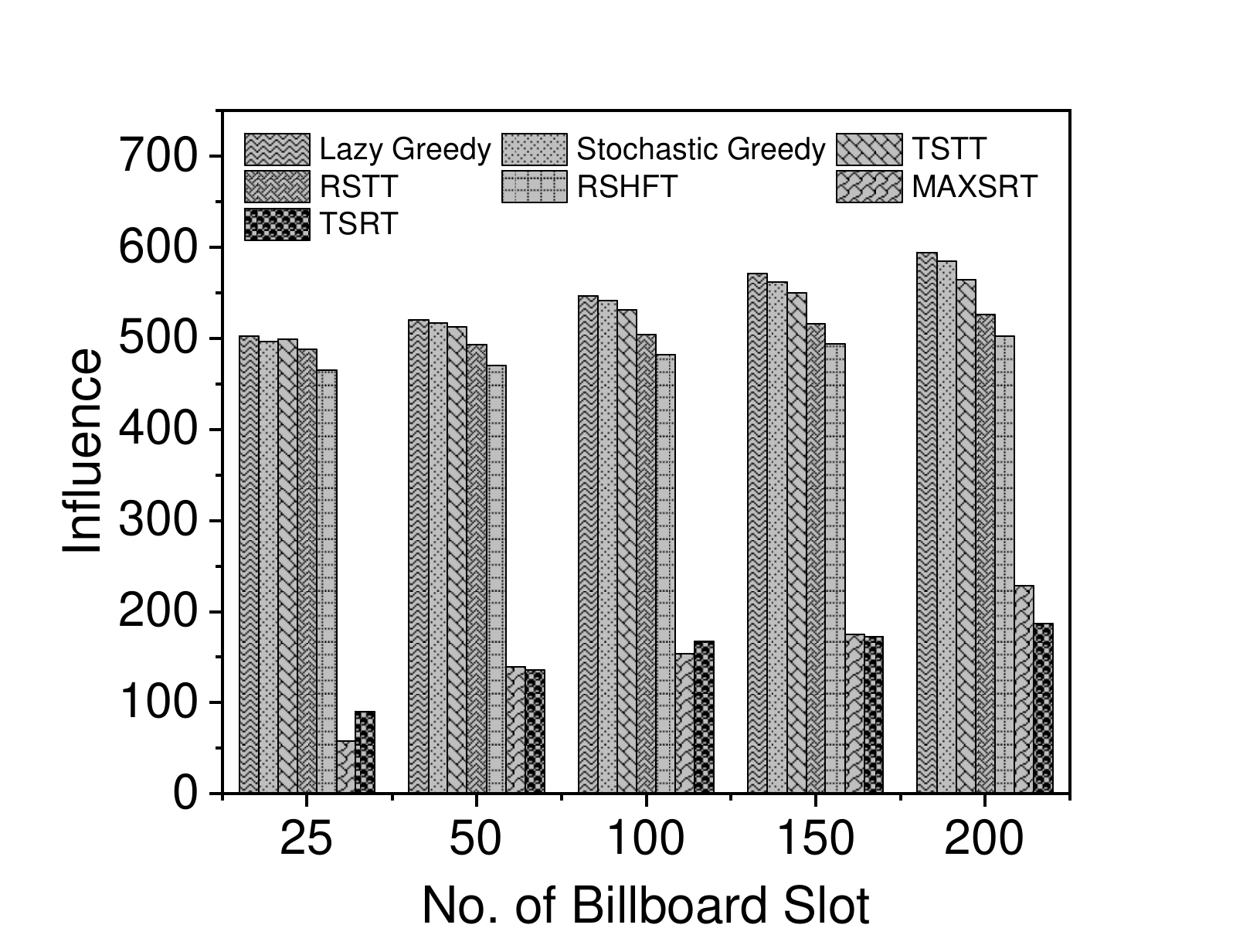}\hspace{-2em} &
\includegraphics[width=0.29\linewidth]{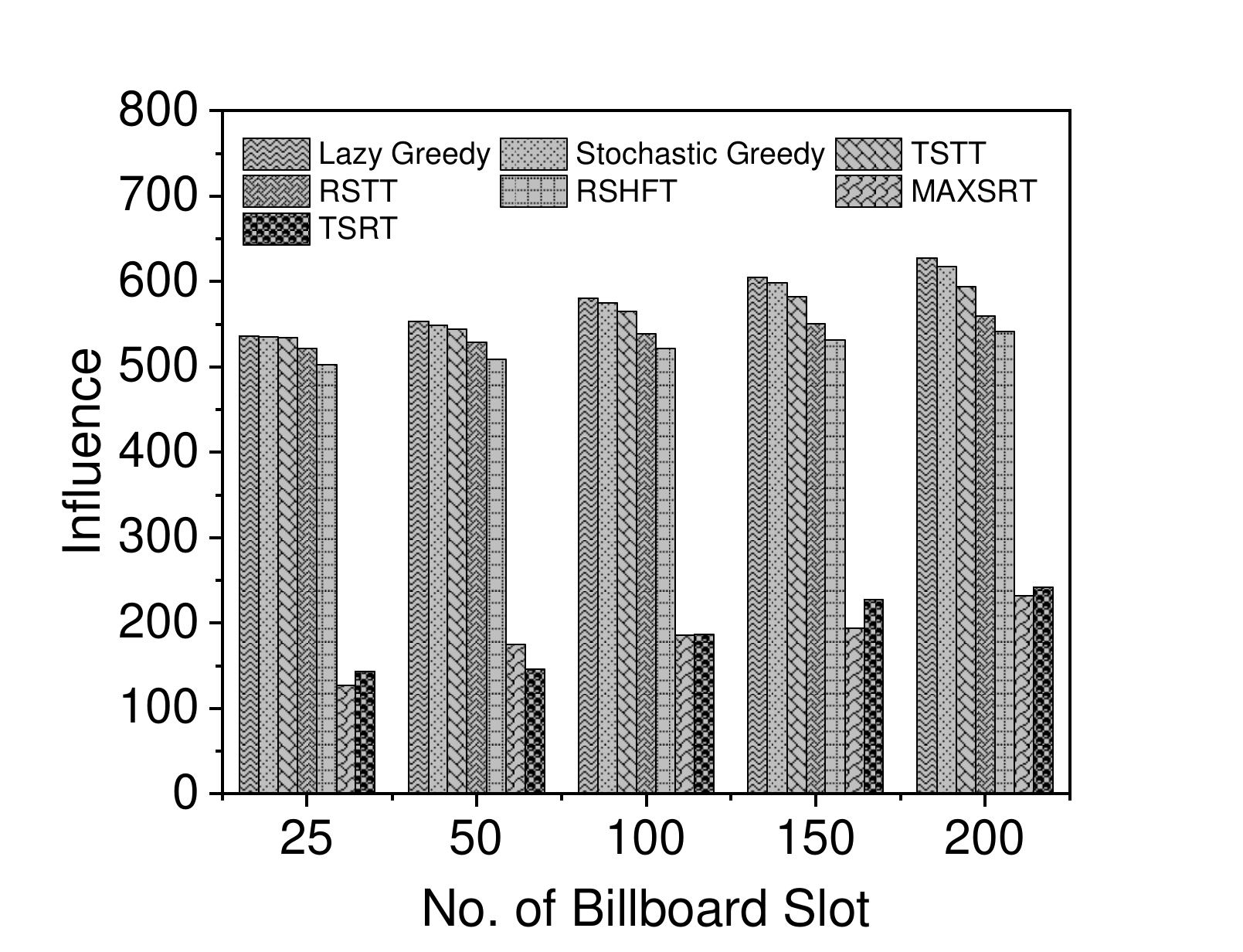} \\ 
{\tiny(a) $\ell = 10 $} & {\tiny(b) $\ell = 20 $} &  {\tiny(c) $\ell = 30$} &  {\tiny(d) $\ell = 40$} \\
\vspace{-0.5em}
\includegraphics[width=0.29\linewidth]{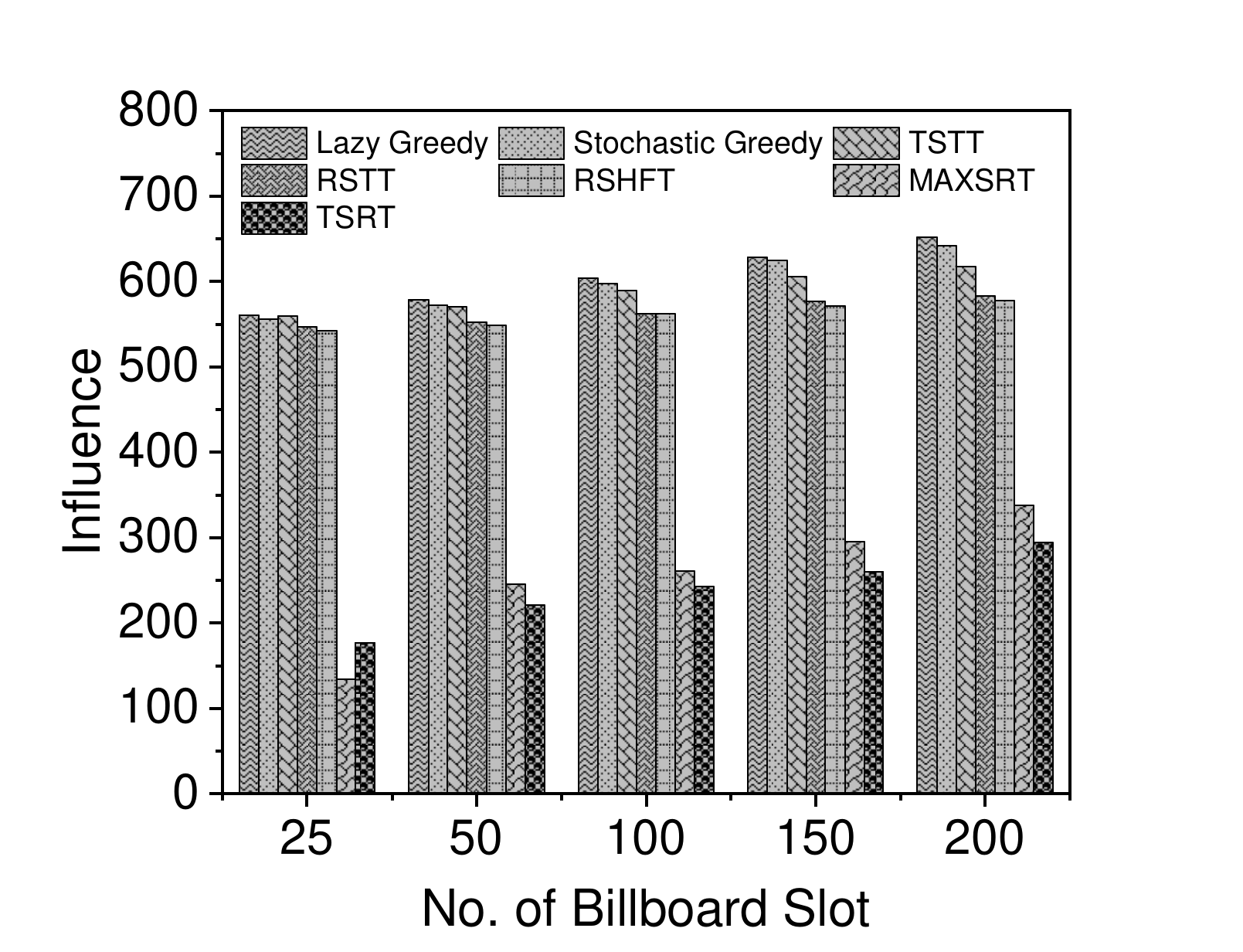}\hspace{-2em} &
\includegraphics[width=0.29\linewidth]{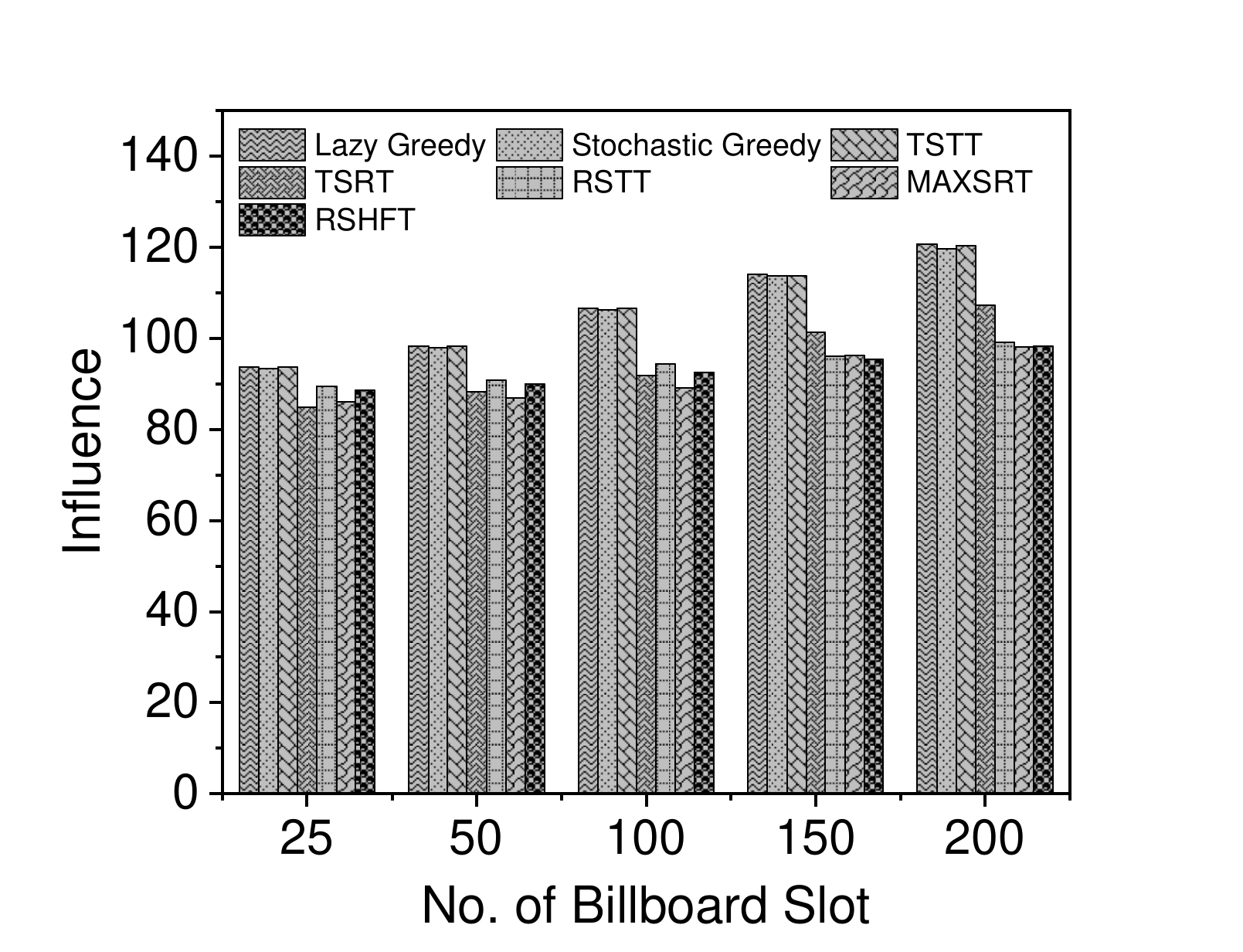}\hspace{-2em} & 
\includegraphics[width=0.29\linewidth]{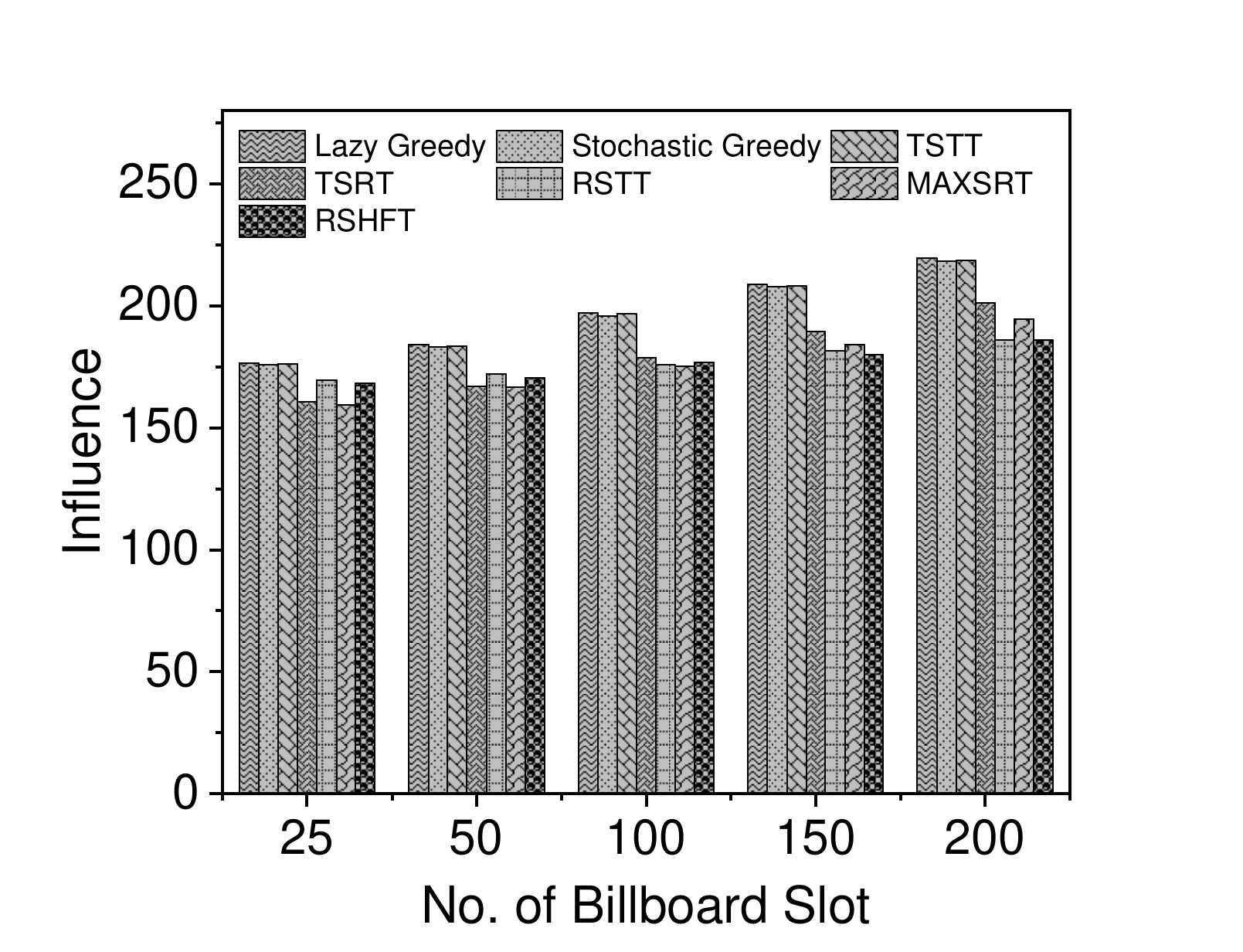}\hspace{-2em} & 
\includegraphics[width=0.29\linewidth]{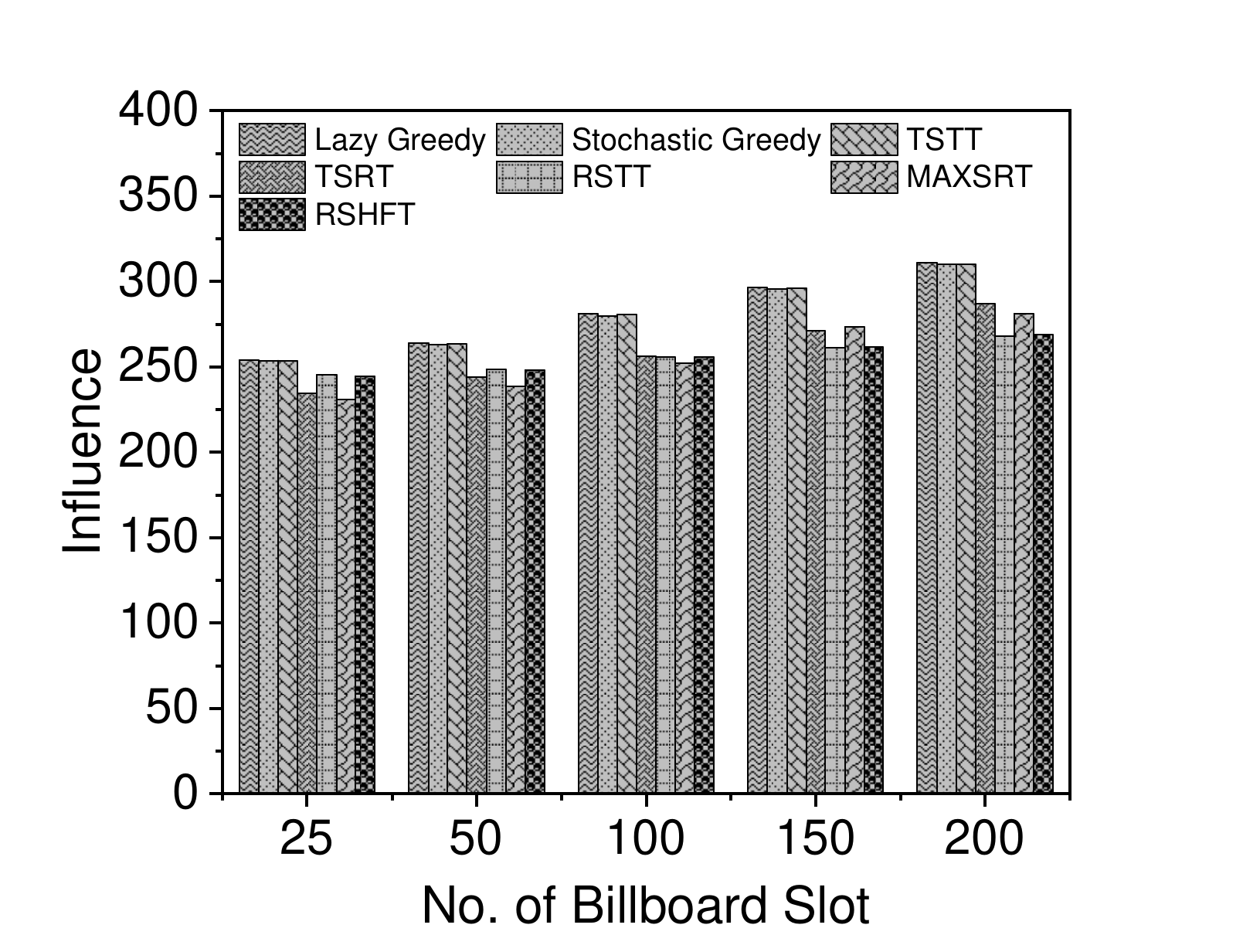} \\
{\tiny(e) $\ell = 50 $} &  {\tiny(f) $\ell = 10 $} &  {\tiny(g) $\ell = 20$} &  {\tiny(h) $\ell = 30$} \\
\vspace{-0.5em}
\includegraphics[width=0.29\linewidth]{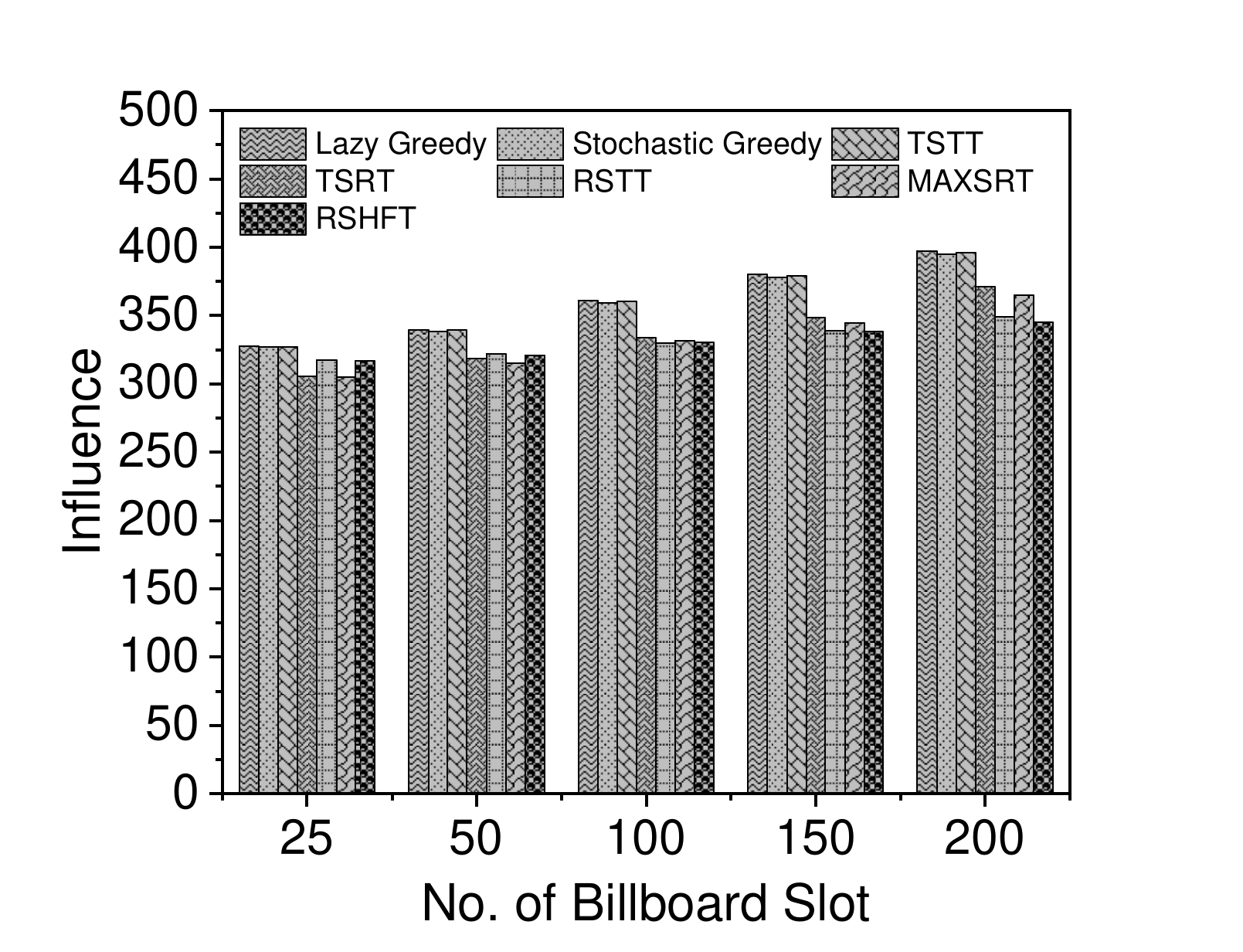}\hspace{-2em} & 
\includegraphics[width=0.29\linewidth]{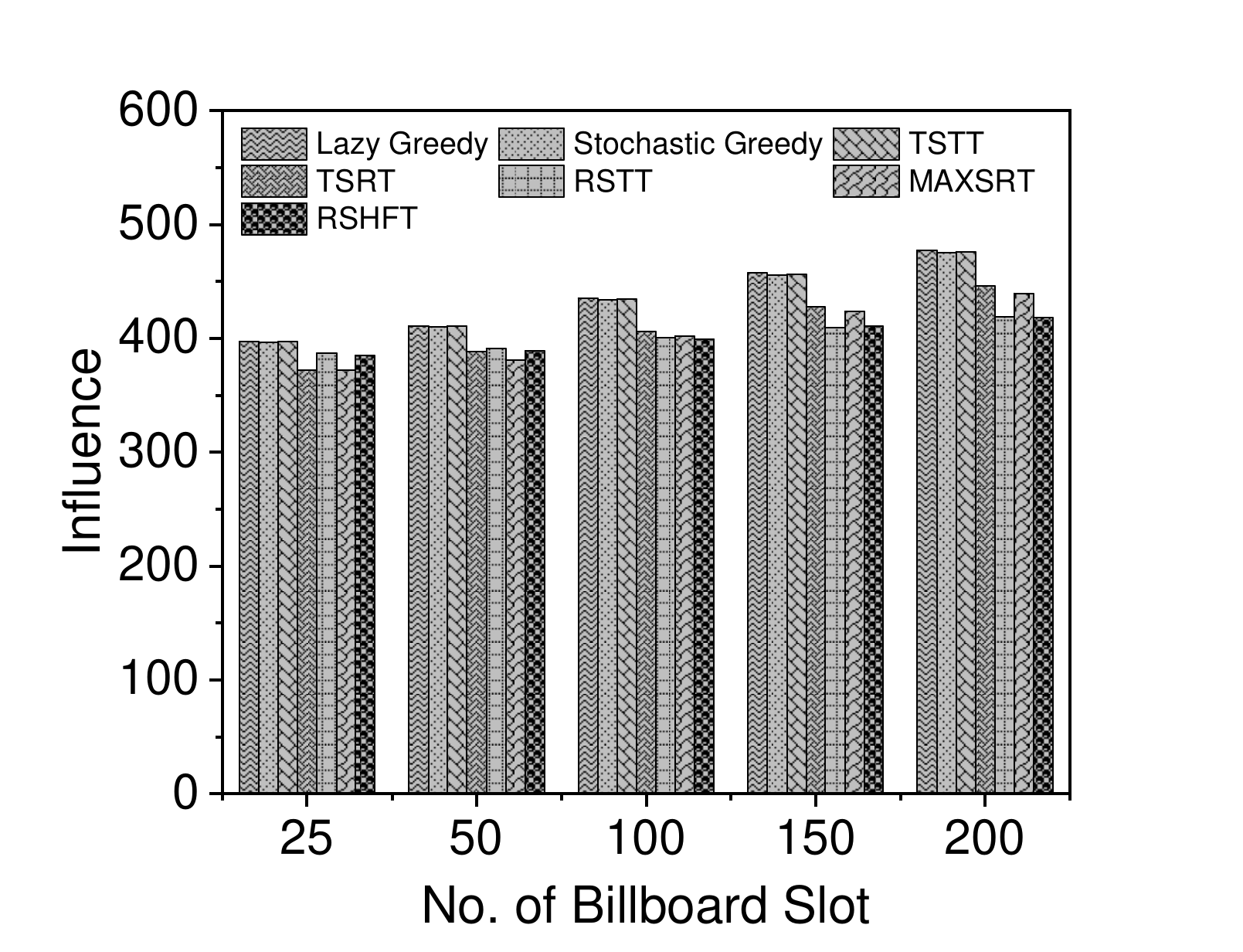}\hspace{-2em} &
\includegraphics[width=0.29\linewidth]{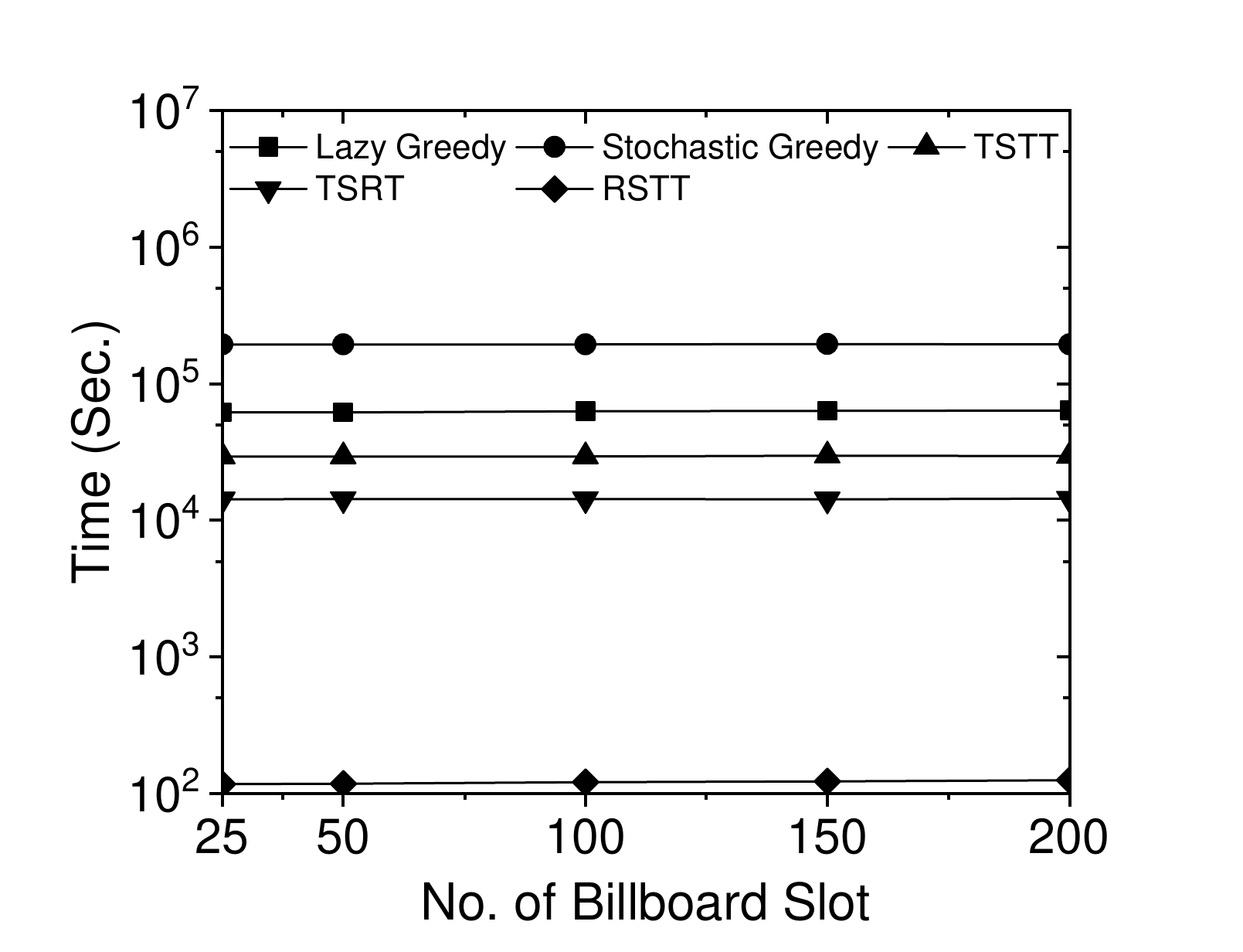}\hspace{-2em} & 
\includegraphics[width=0.29\linewidth]{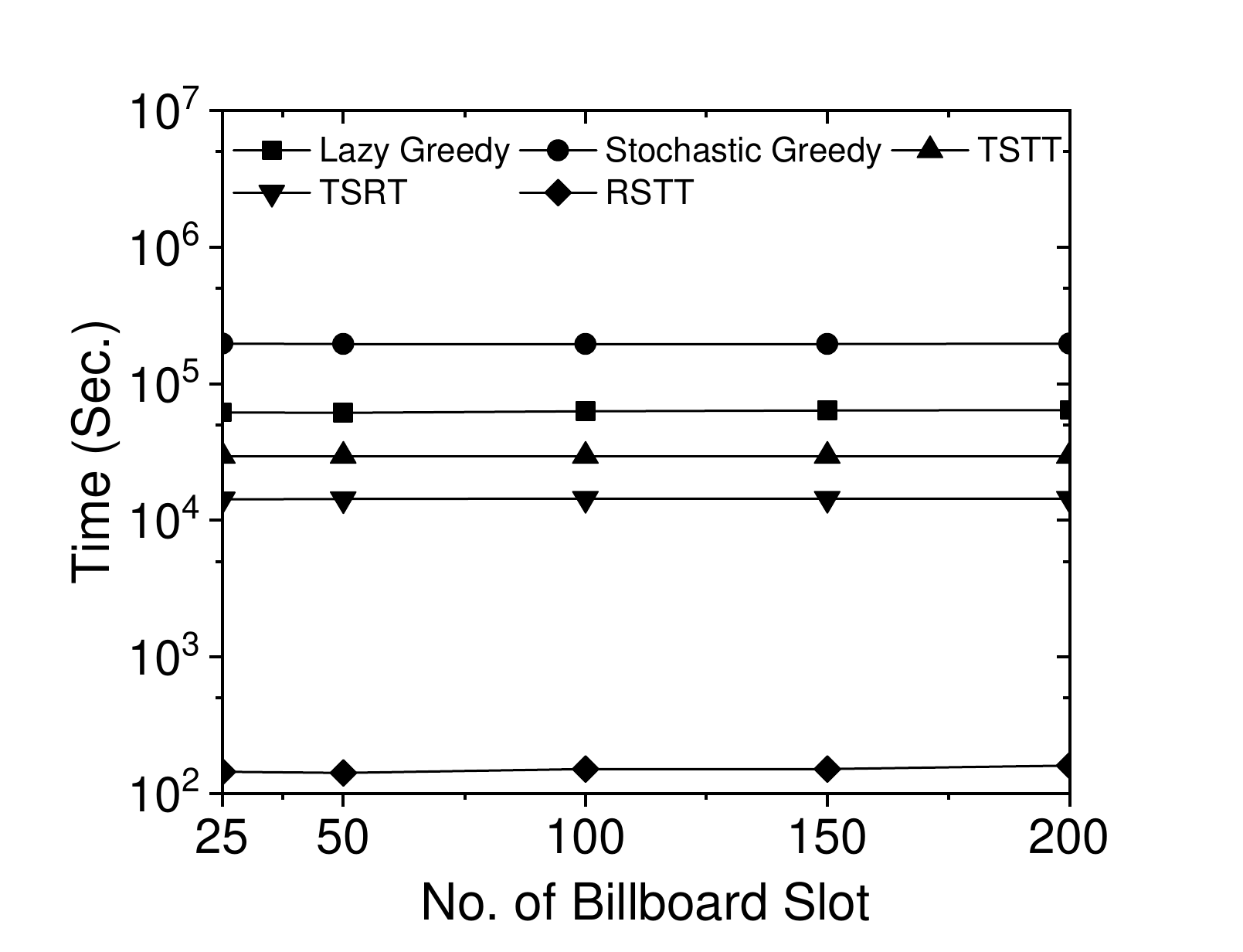} \\
 {\tiny(i) $\ell = 40 $} &  {\tiny(j) $\ell = 50 $} &  {\tiny(k) $\ell = 10$} &  {\tiny(l) $\ell = 20$} \\
\vspace{-0.5em}
\includegraphics[width=0.29\linewidth]{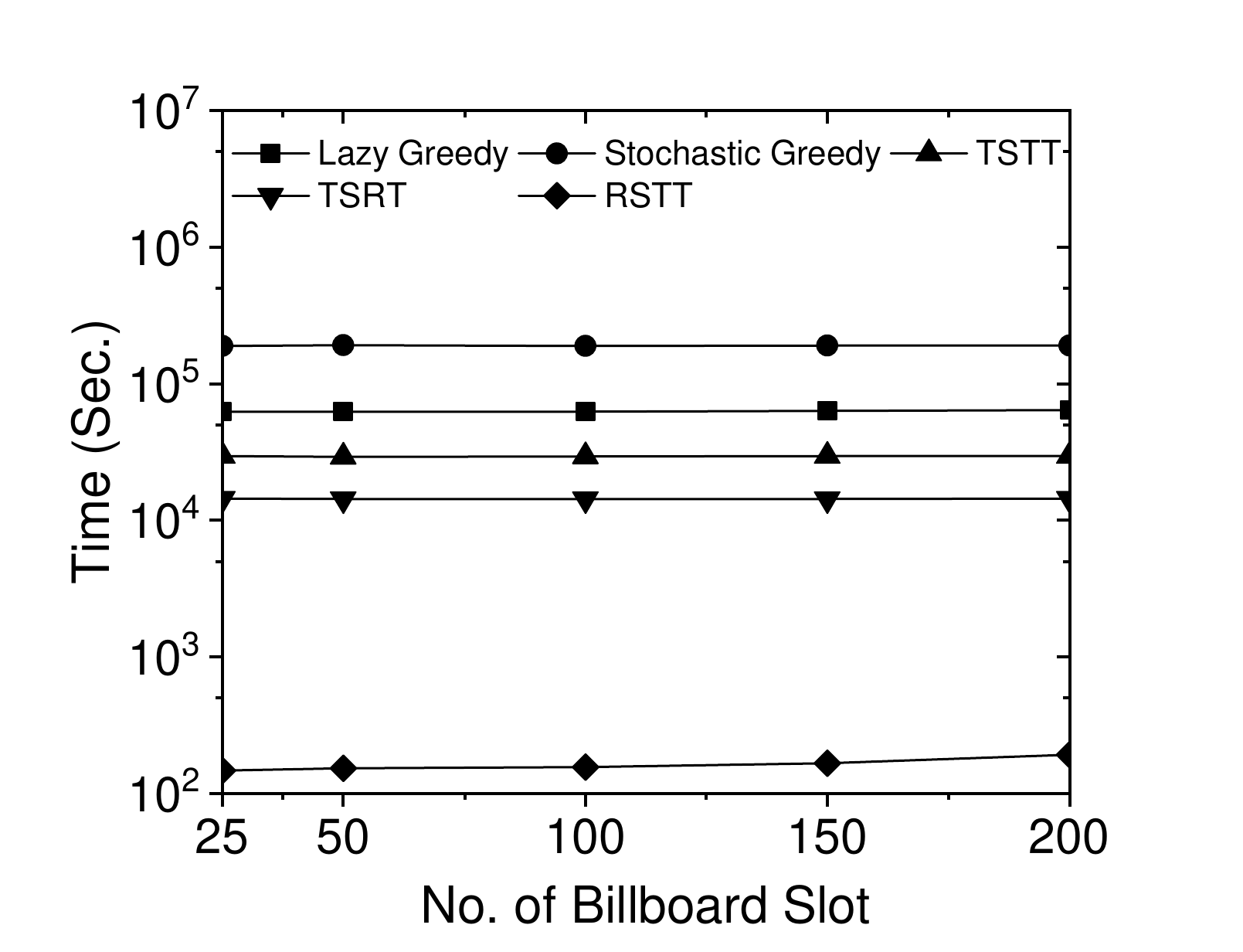}\hspace{-2em} & 
\includegraphics[width=0.29\linewidth]{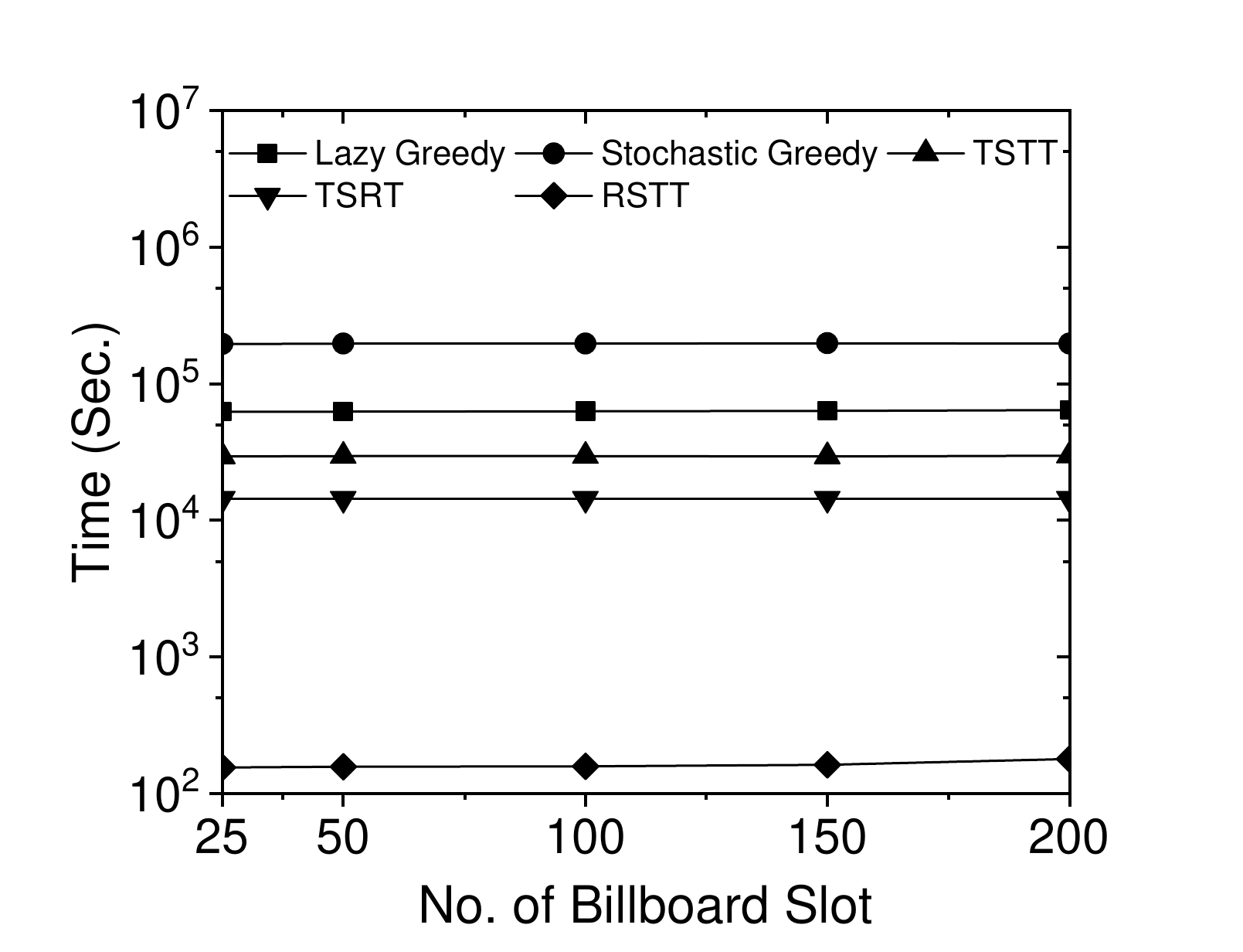}\hspace{-2em} &
\includegraphics[width=0.29\linewidth]{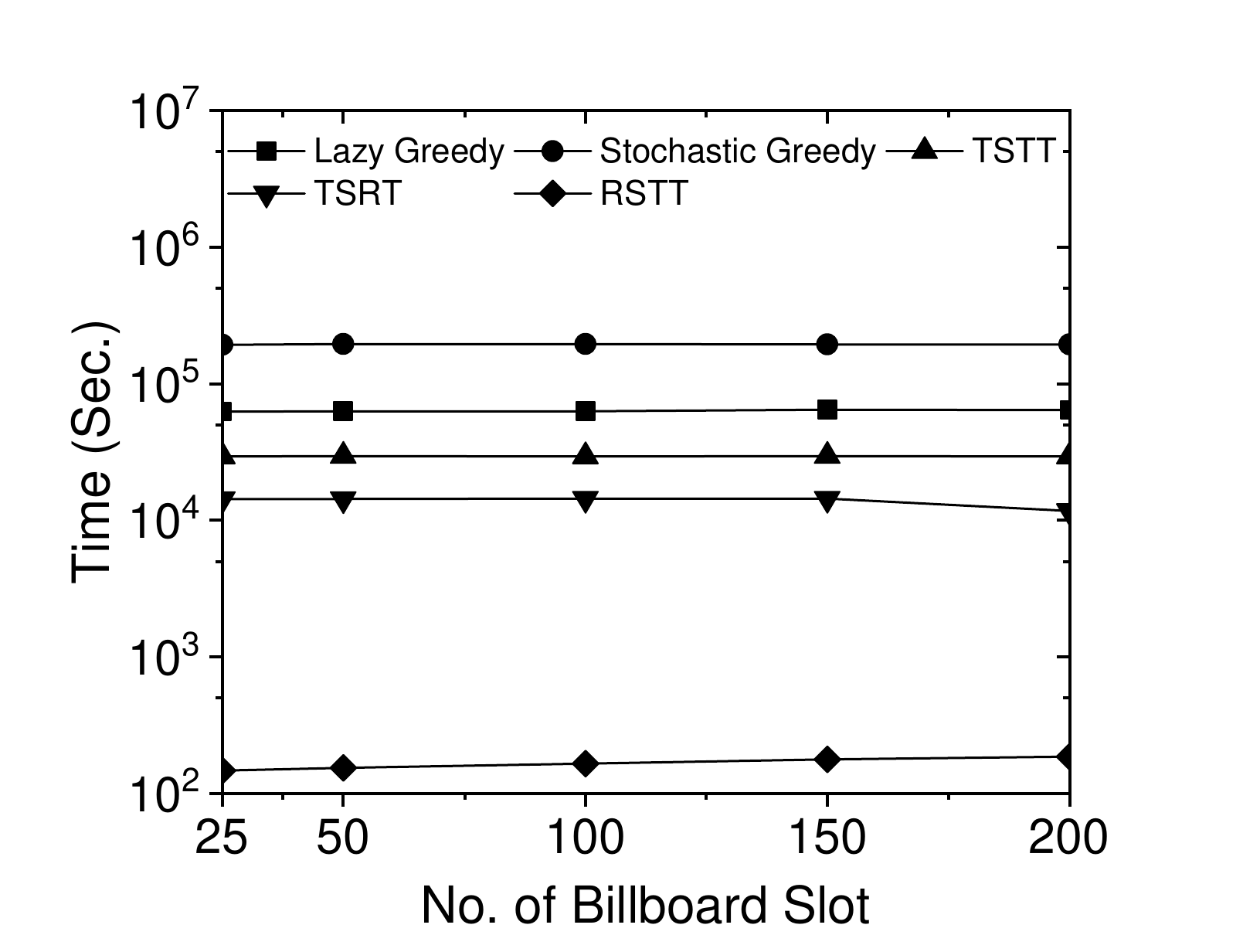}\hspace{-2em} & 
\includegraphics[width=0.29\linewidth]{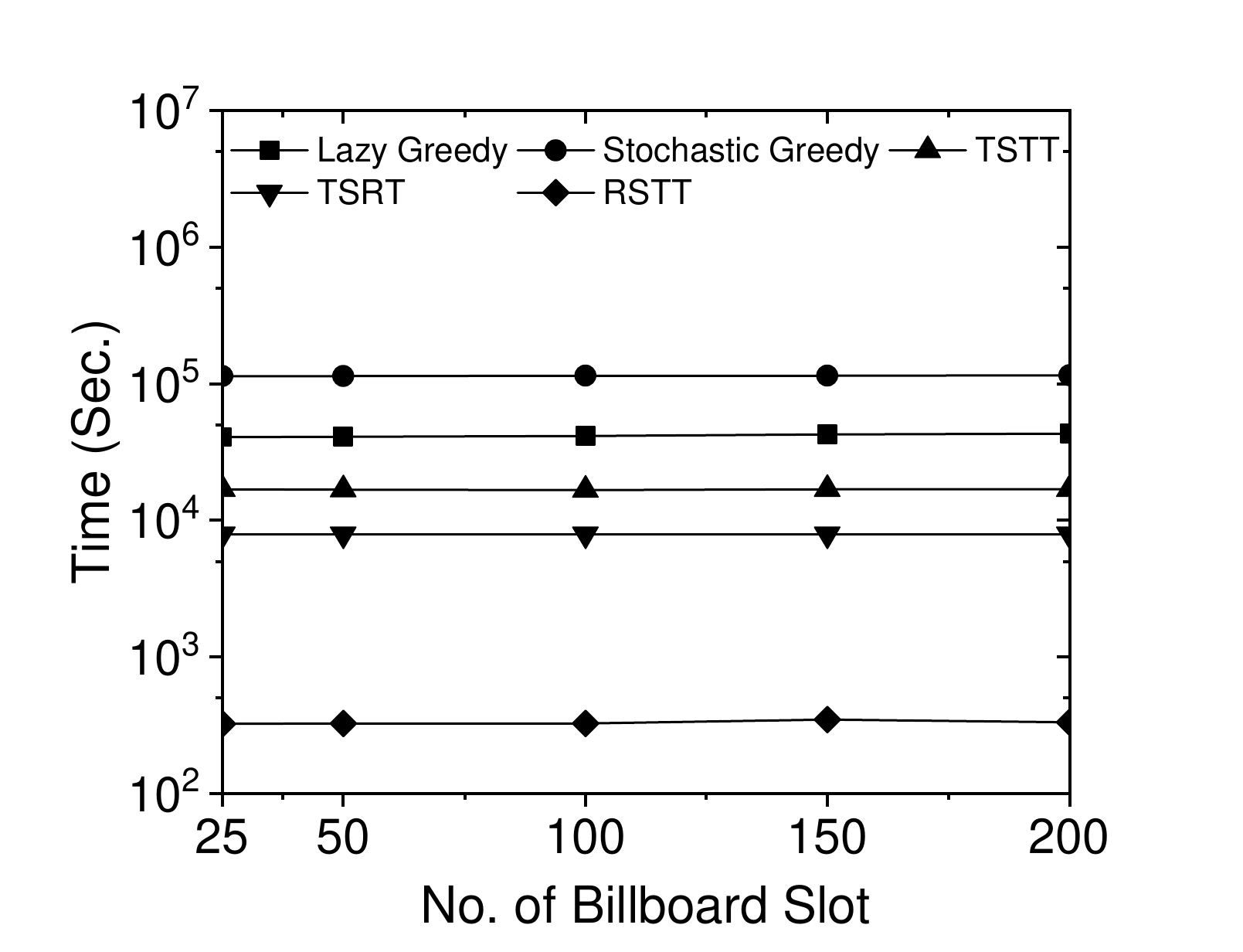} \\
 {\tiny(m) $\ell = 30 $} &  {\tiny(n) $\ell = 40 $} &  {\tiny(o) $\ell = 50$} &  {\tiny(p) $\ell = 10$} \\
\vspace{-0.5em}
\includegraphics[width=0.29\linewidth]{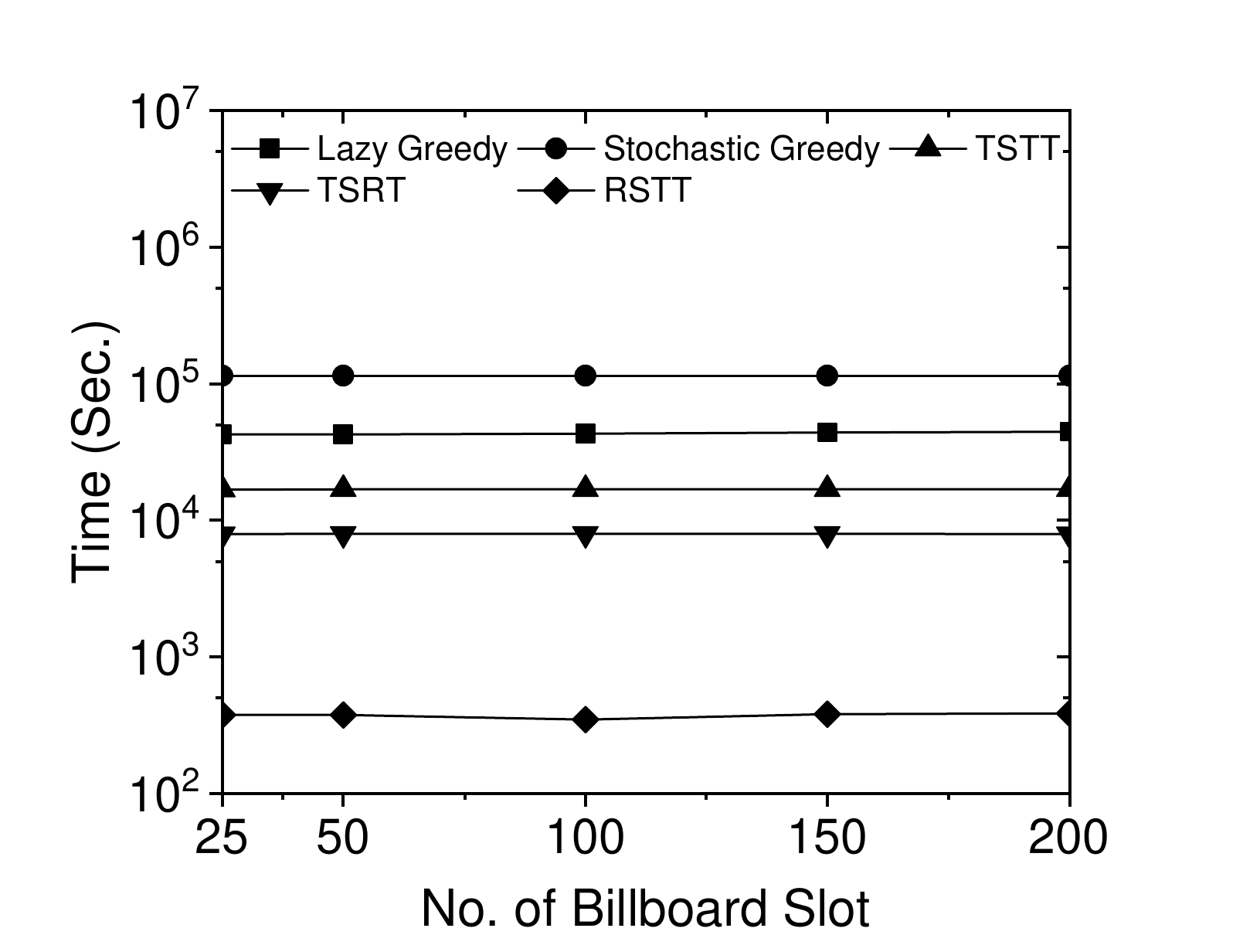}\hspace{-2em} & 
\includegraphics[width=0.29\linewidth]{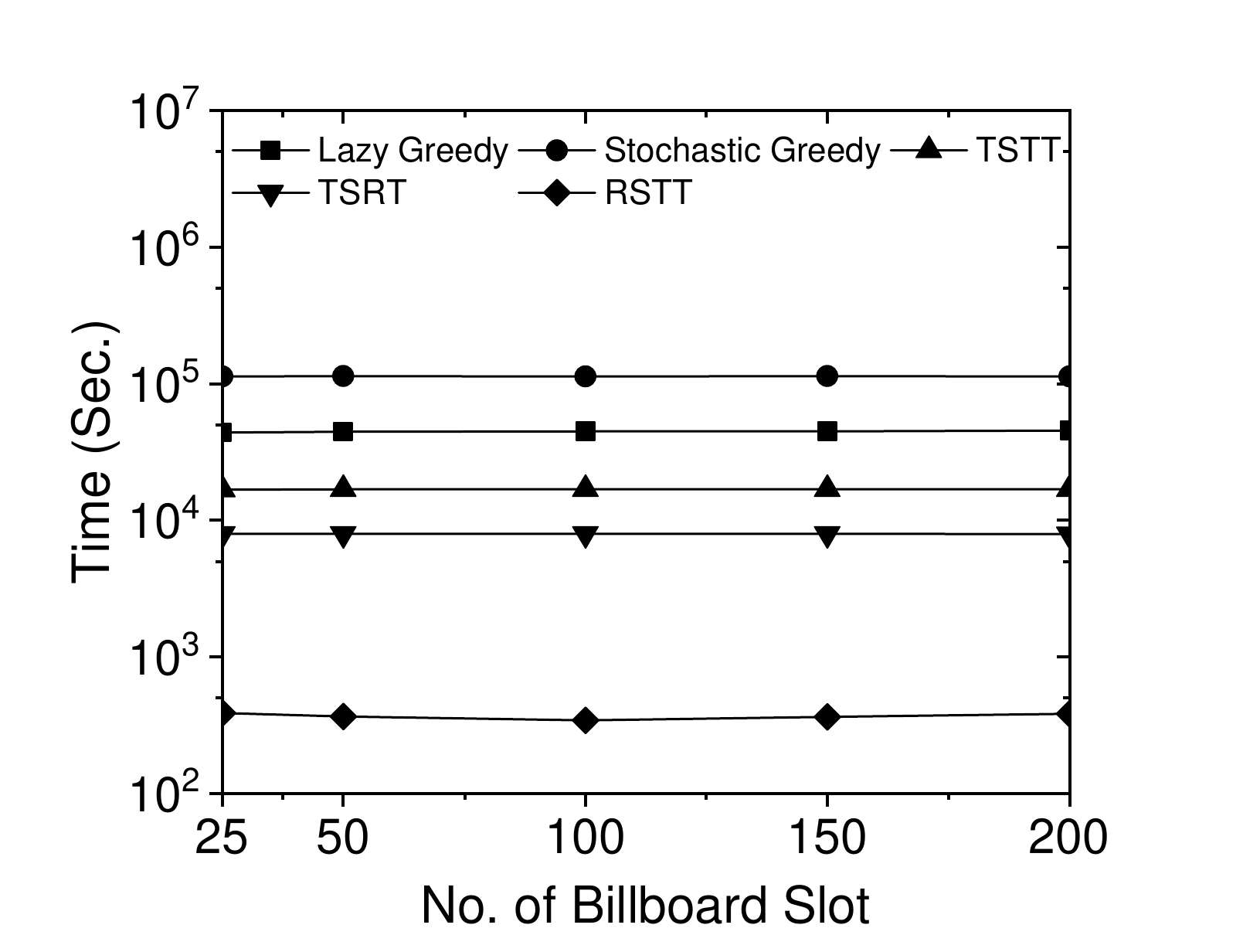}\hspace{-2em} &
\includegraphics[width=0.29\linewidth]{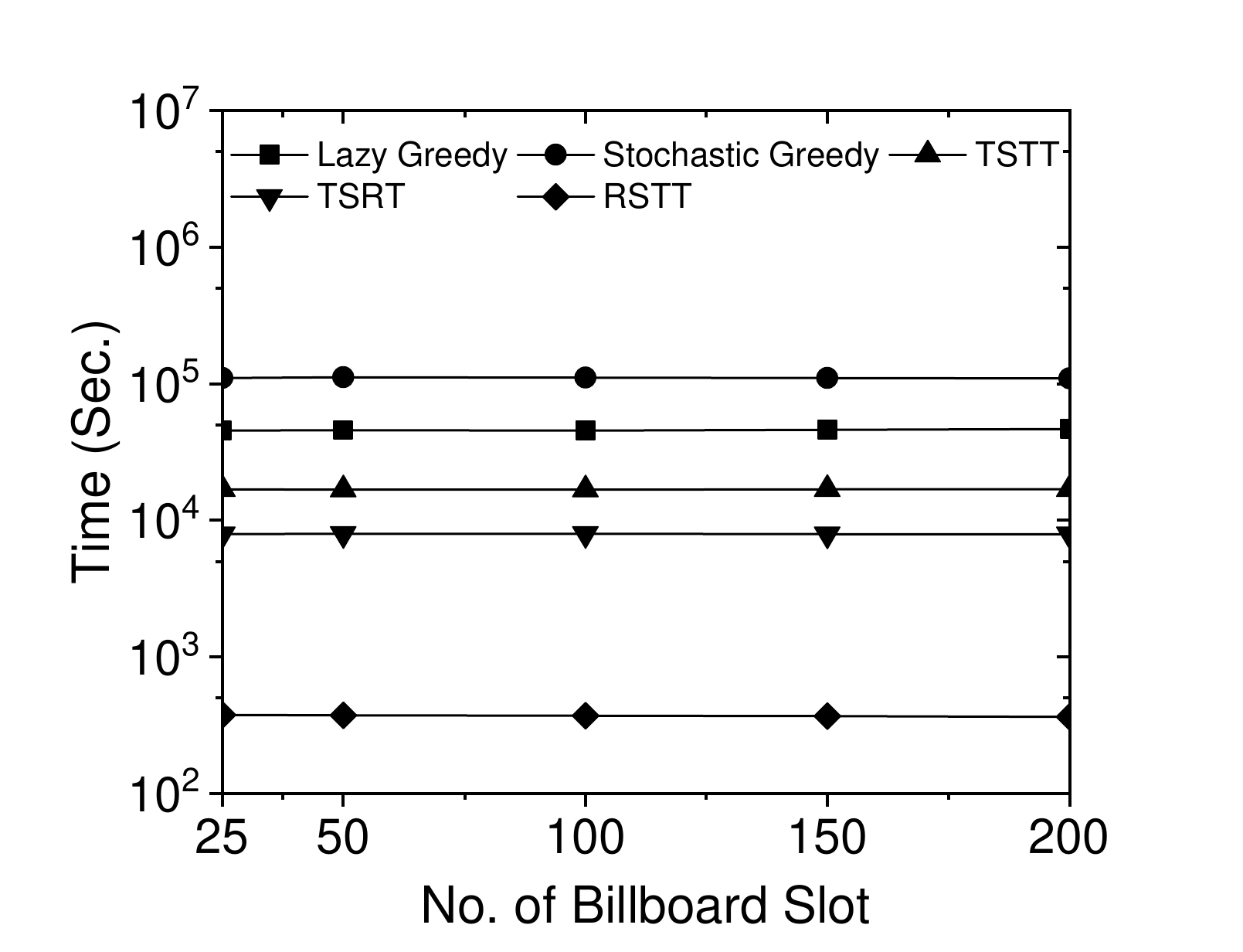}\hspace{-2em} & 
\includegraphics[width=0.29\linewidth]{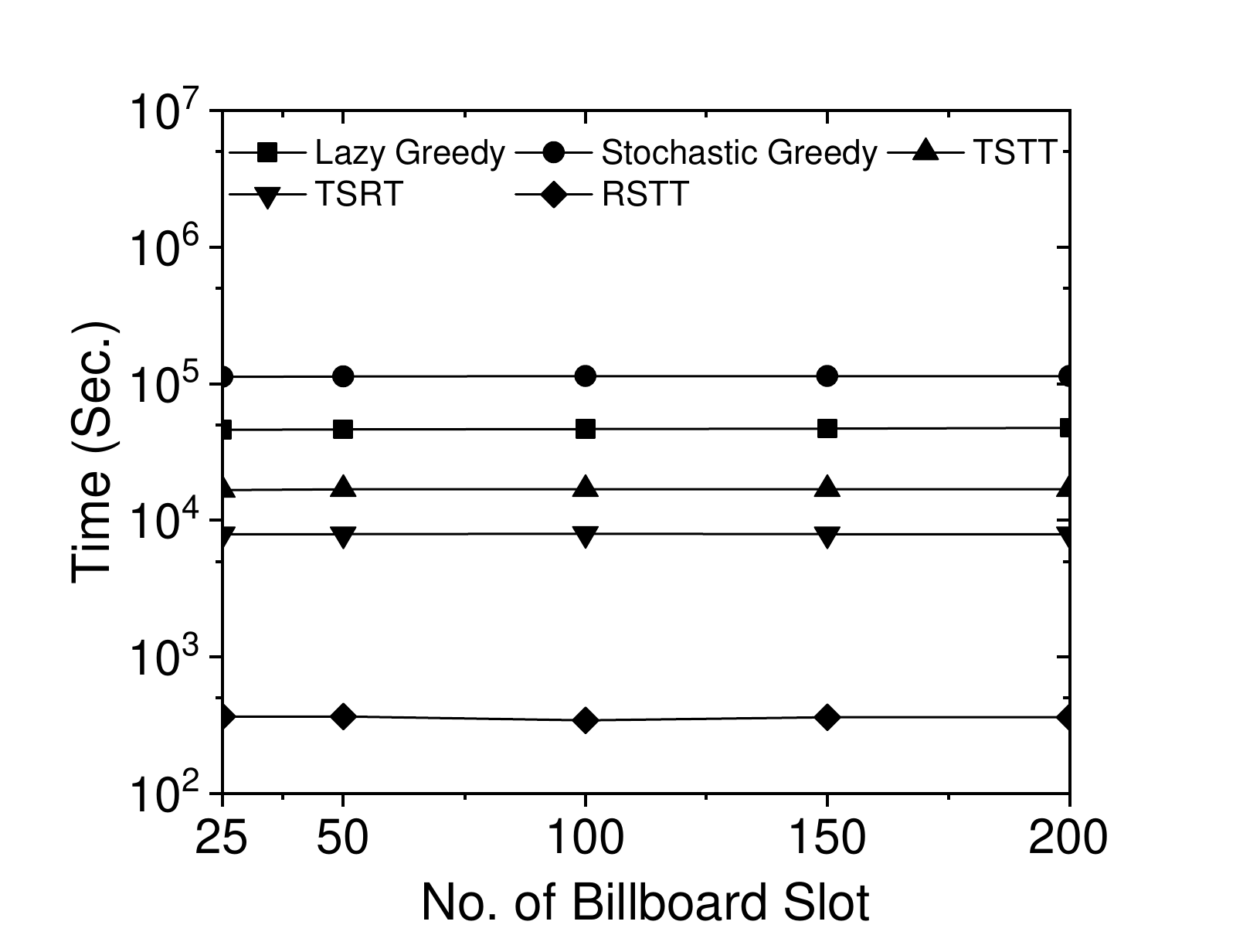} \\
{\tiny (q) $\ell = 20 $} &  {\tiny(r) $\ell = 30 $} &  {\tiny(s) $\ell = 40$} &  {\tiny(t) $\ell = 50$} \\
\end{tabular}
\caption{$(1)$ Influence varying $\ell$, when $k = 25$ to $200$, $\epsilon = 0.01$ ,$(a, b, c, d, e)$ for NYC, and $(f, g, h, i, j)$ for LA Dataset. $(2)$ Time varying $\ell$, when $k = 25$ to $200$, $\epsilon = 0.01$ ,$(k, \ell, m, n, o)$ for NYC, and $(p, q, r, s, t)$ for LA Dataset.}
\label{Fig:Budget_Vs_INF_Time}
\end{figure*}
\vspace{-0.1in}
\paragraph{\textbf{Budget $(k,\ell)$ Vs. Time.}}
To understand the time requirement for proposed and baseline methods, we vary different $k$, and $\ell$ values with respect to time. From Figure \ref{Fig:Budget_Vs_INF_Time}, it is observed that with a fixed value of $\ell$, when $k$ increases, the time requirement also increases. For example, in the LA dataset, when we fixed the value of $\ell=10, \epsilon=0.01$, and varied $k$ value from $25$ to $200$, the time requirement in seconds for `Lazy Greedy', `Stochastic Greedy', `TSTT', `TSRT', and  `RSTT' increases from $40571$, $112259$, $16667$, $7929$, $323$ to $43193$, $114119$, $16885$, $7984$, $332$ respectively. Here, we observed that small changes in time between $k=25$ and $k=200$ happen, and this occurs due to marginal gain computation for each proposed method as well as the baseline method. Similarly, when we set $\ell=50, \epsilon=0.01$, and vary $k=25$ to $k=200$, the time requirement for `Lazy Greedy', `Stochastic Greedy', `TSTT', `TSRT', and  `RSTT' also increases from $46002$, $113872$, $16855$, $7945$, $360$ to $47652$, $115215$, $16907$, $7993$, $365$ respectively. One point needs to be noted that the experimental results of `Lazy Greedy' are reported in Figure \ref{Fig:Budget_Vs_INF_Time}, which is the best case time requirements, and in the worst case, it will take the same run time as `Incremental Greedy'. However, when the dataset is large, `Lazy Greedy' may not be the right choice. Now, in the case of `Stochastic Greedy', its computational time is always far better than the `Incremental Greedy' method as it is independent of the size of $k$, and $\ell$ as discussed in Lemma \ref{lemma:4}. In the case of the NYC dataset, a similar behavior is observed as of the LA dataset for the proposed and baseline methods. We have not reported the time requirements for the `RSHFT', `MAXSRT', and `RSRT' methods as these methods take less than $10$ seconds of computational time.
\begin{figure*}[h!]
\centering
\begin{tabular}{cccc}
\includegraphics[width=0.29\linewidth]{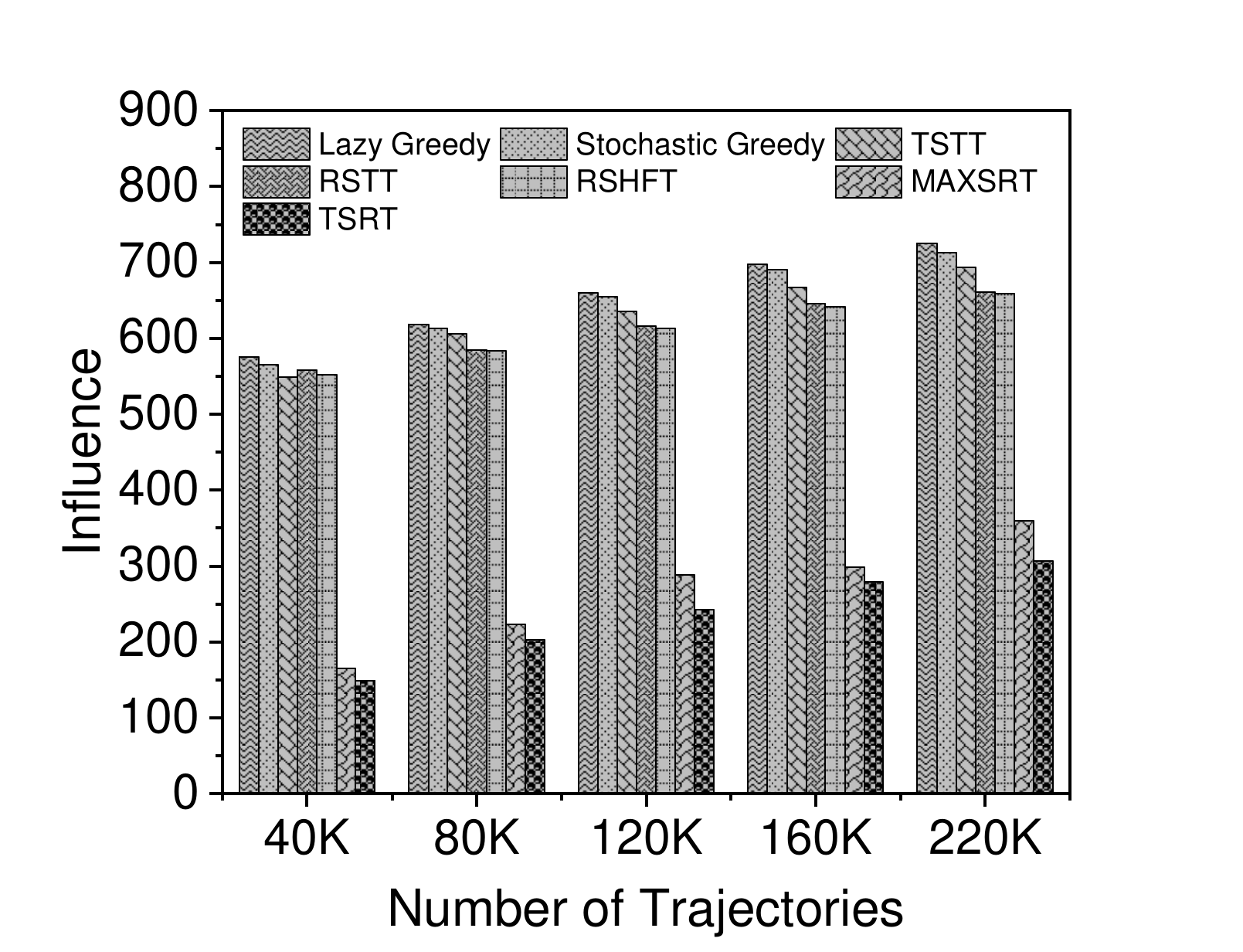}\hspace{-2em} & 
\includegraphics[width=0.29\linewidth]{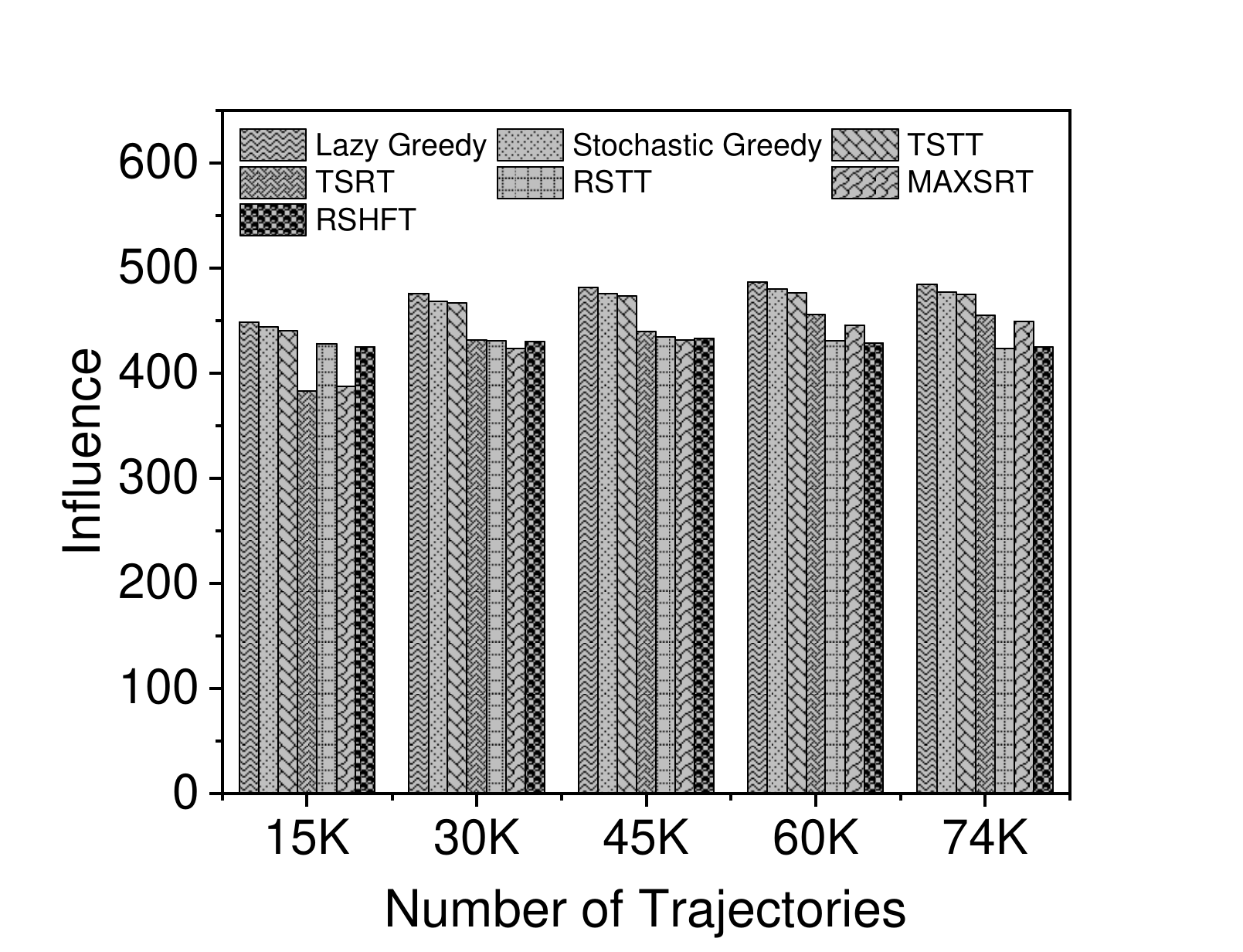}\hspace{-2em}  &
\includegraphics[width=0.29\linewidth]{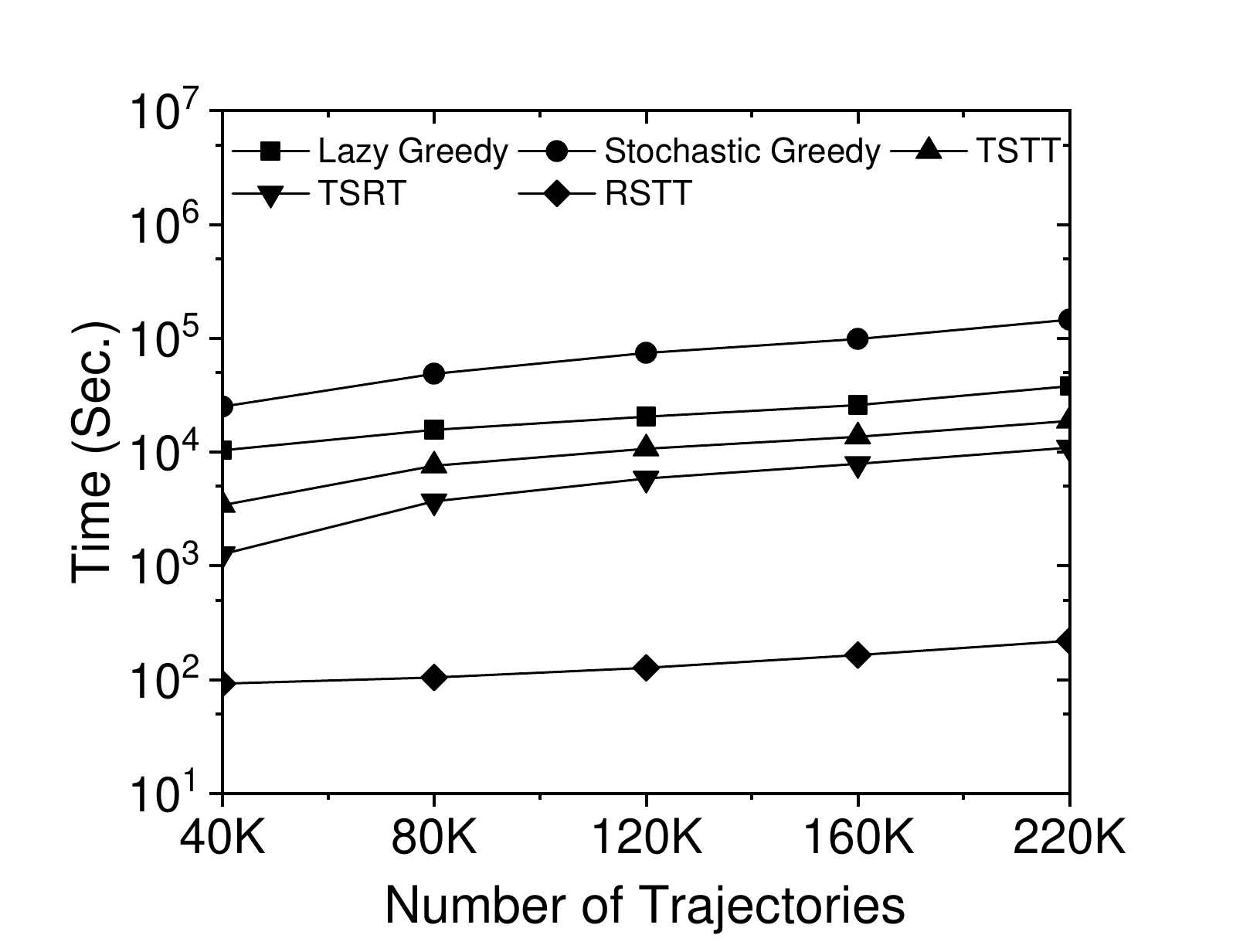}\hspace{-2em} &
\includegraphics[width=0.29\linewidth]{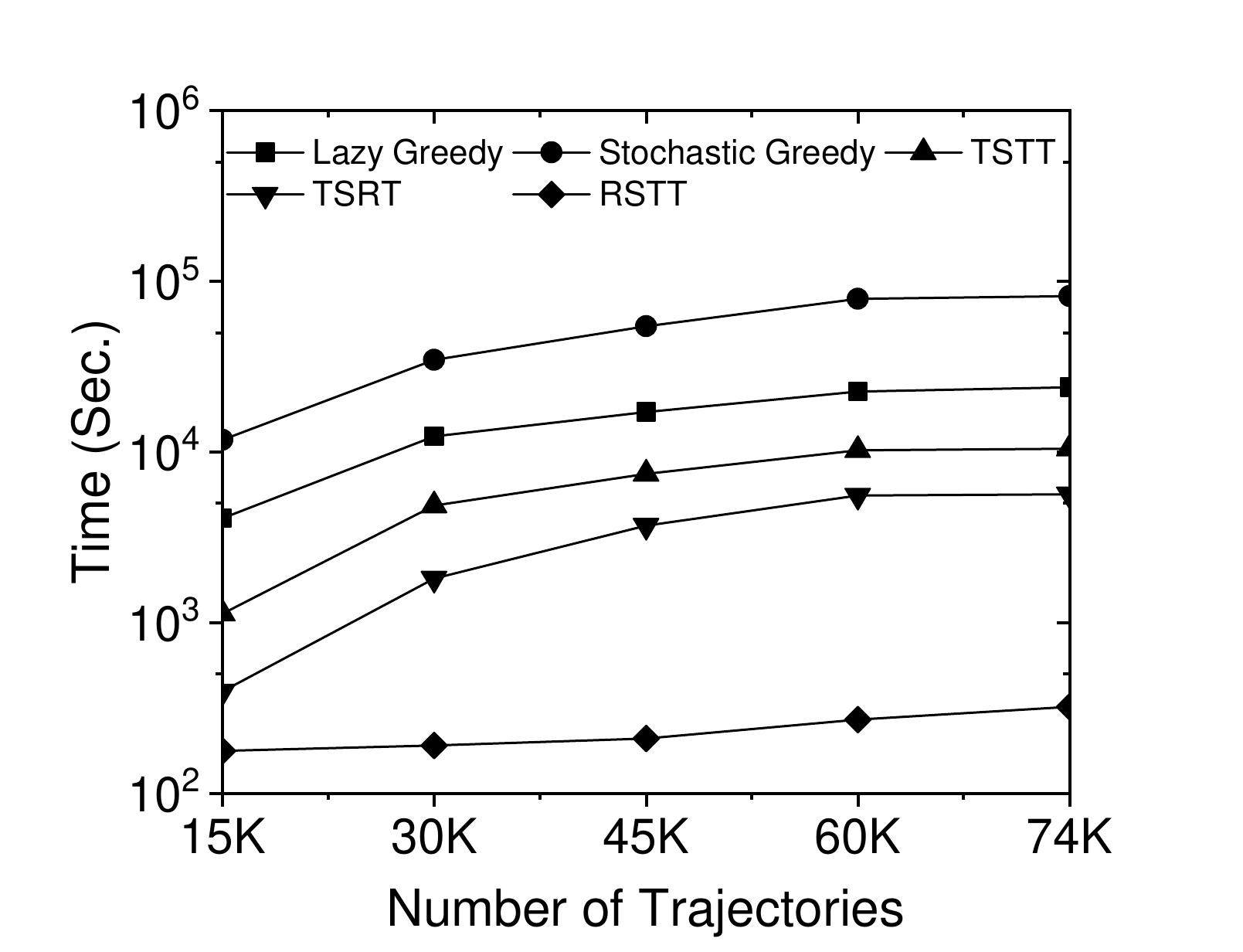} \\
{\tiny(a) $k =200, \ell = 50$} & {\tiny(b) $k =200, \ell = 50$} & {\tiny(c) $k =200, \ell = 50$} & {\tiny(d) $k =200, \ell = 50$}\\
\end{tabular}
\caption{$(1)$ Influence varying trajectory size, when $k = 200$, $\ell = 50$, and $\epsilon = 0.01$ $(a)$ for NYC, and $(b)$ for LA Dataset. $(2)$ Time varying trajectory size, when $k = 200$, $\ell = 50$, and $\epsilon = 0.01$ $(c)$ for NYC, and $(d)$ for LA Dataset.}
\label{Fig:TJ_Vs_INF_Time}
\end{figure*}
\vspace{-0.1in}
\paragraph{\textbf{Trajectory Size Vs. Influence, Time.}}
Figure \ref{Fig:TJ_Vs_INF_Time} shows the impact of varying trajectory size on influence and run time. We observe: (1) the influence of all proposed and baseline methods increases with the increment of trajectory size because more users can be influenced. (2) In the NYC and LA datasets, the influence of `Lazy Greedy', and `Stochastic Greedy' is consistently better than the baseline methods. We take $k=200$, and $\ell =50$, and vary trajectory size $40k$ to $200k$ for the NYC, and $15k$ to $74k$ for the LA dataset as shown in Figure \ref{Fig:TJ_Vs_INF_Time}(a), \ref{Fig:TJ_Vs_INF_Time}(b). (3) In the NYC dataset, when trajectory sizes are $40k$, $80k$, $120k$, $160k$, $200k$, and their corresponding unique users encountered are $924$, $969$, $1017$, $1064$, $1083$, respectively. In the LA dataset, when the trajectory varies between $15k$ to $74k$, the number of unique users encountered is $2000$. (4) Figures \ref{Fig:TJ_Vs_INF_Time}(c), and \ref{Fig:TJ_Vs_INF_Time}(d) shows computational time for the NYC and LA dataset. We observe that `Lazy Greedy', and `Stochastic Greedy' scale linearly w.r.t. trajectory size, consistent in our analysis in both the NYC and LA datasets.  Although the growth in time requirement in `Stochastic Greedy' is faster than `Lazy Greedy', e.g., when trajectory size varies from $40k$ to $220k$, and $15k$ to $74k$, the time requirement increases almost $6\times$ and $6.5\times$ times for NYC and LA datasets, respectively. However, in the `Lazy Greedy', run time rises linearly in the best case as only one time marginal gain needs to be computed for all the elements, and from on-wards only comparison operation needs to be executed. (5) Among the baseline methods, `TSTT' takes the maximum time, and with the increase of trajectory size, the run time of all baseline methods increases linearly.
% Plot for LA Dataset
%%%%%%%%%%%%%%%%%%%%%%%%%%%%%%%%%%%%%%%%%%%%%%%%%%%%%%%%%%%%%%%%
\begin{figure*}[h!]
\centering
\begin{tabular}{cccc}
\includegraphics[width=0.29\linewidth]{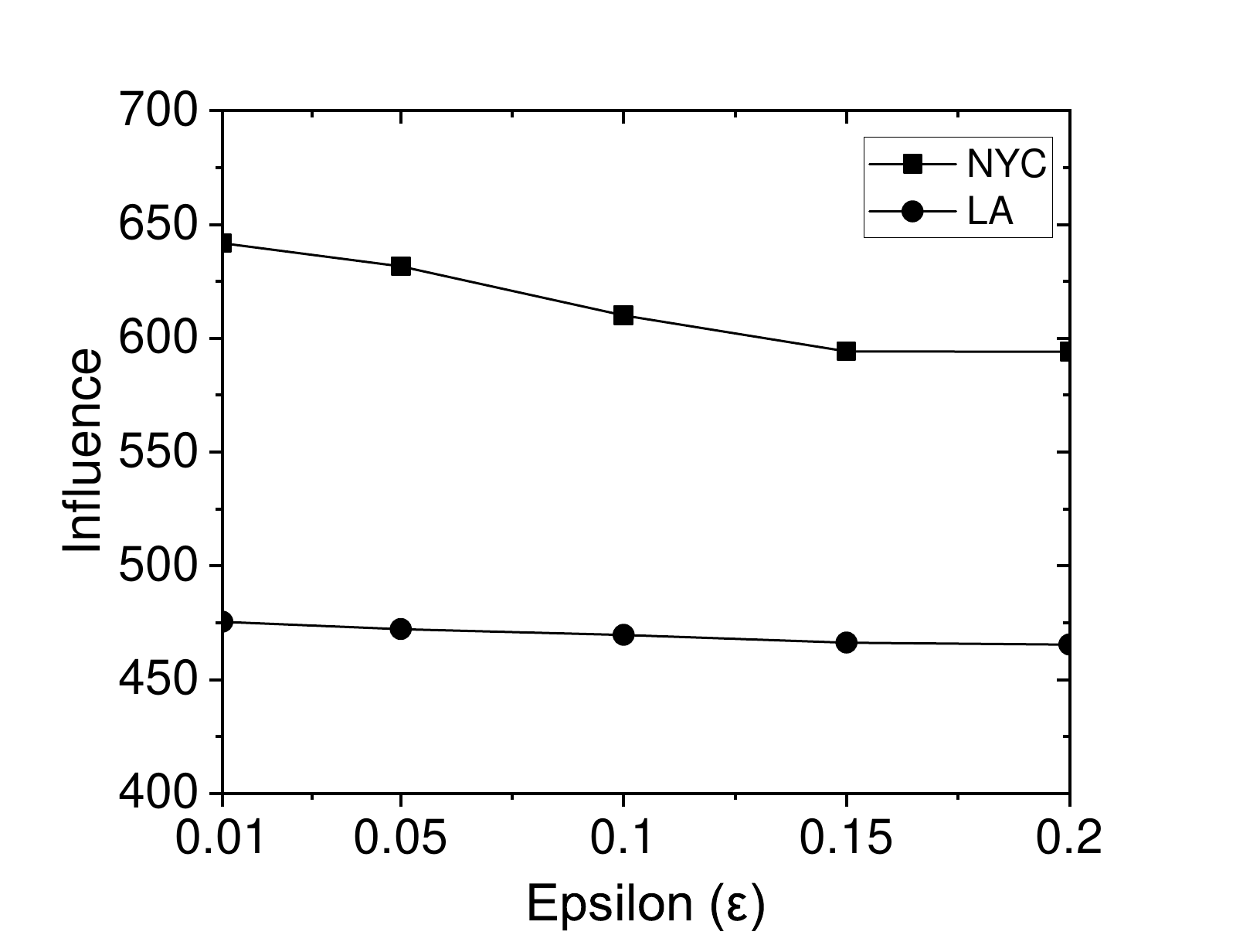}\hspace{-2em} & 
\includegraphics[width=0.29\linewidth]{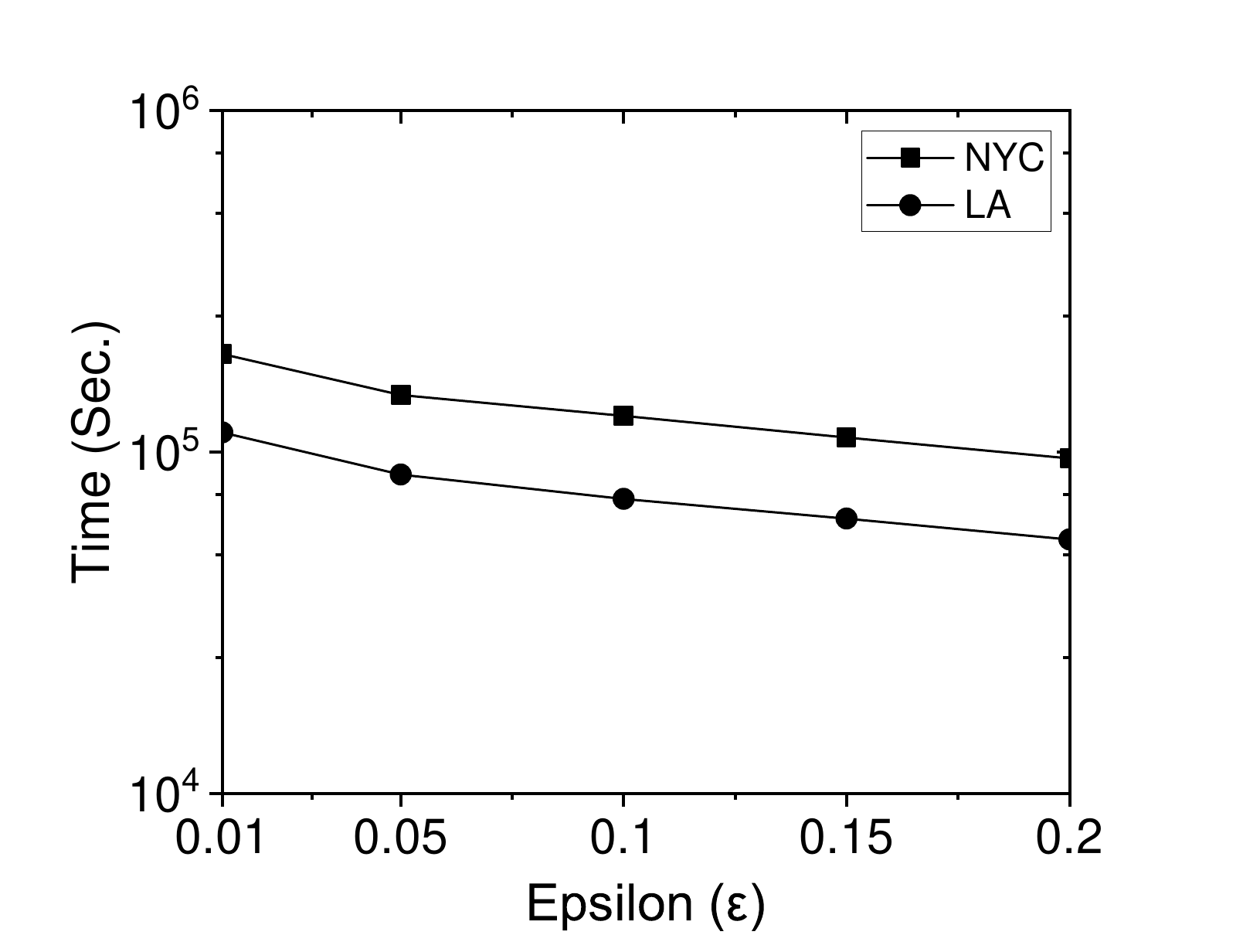}\hspace{-2em} &
\includegraphics[width=0.29\linewidth]{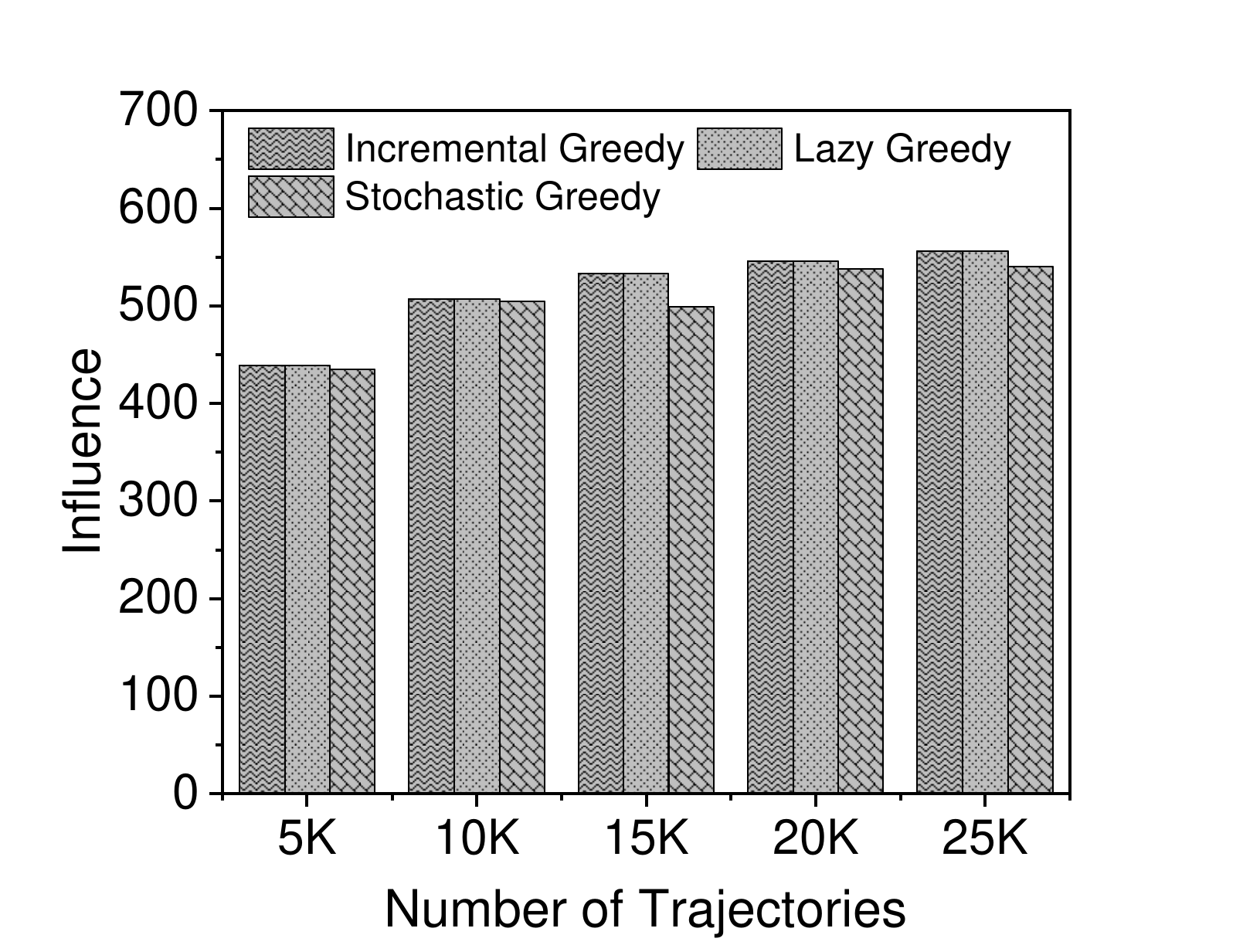}\hspace{-2em} & \includegraphics[width=0.29\linewidth]{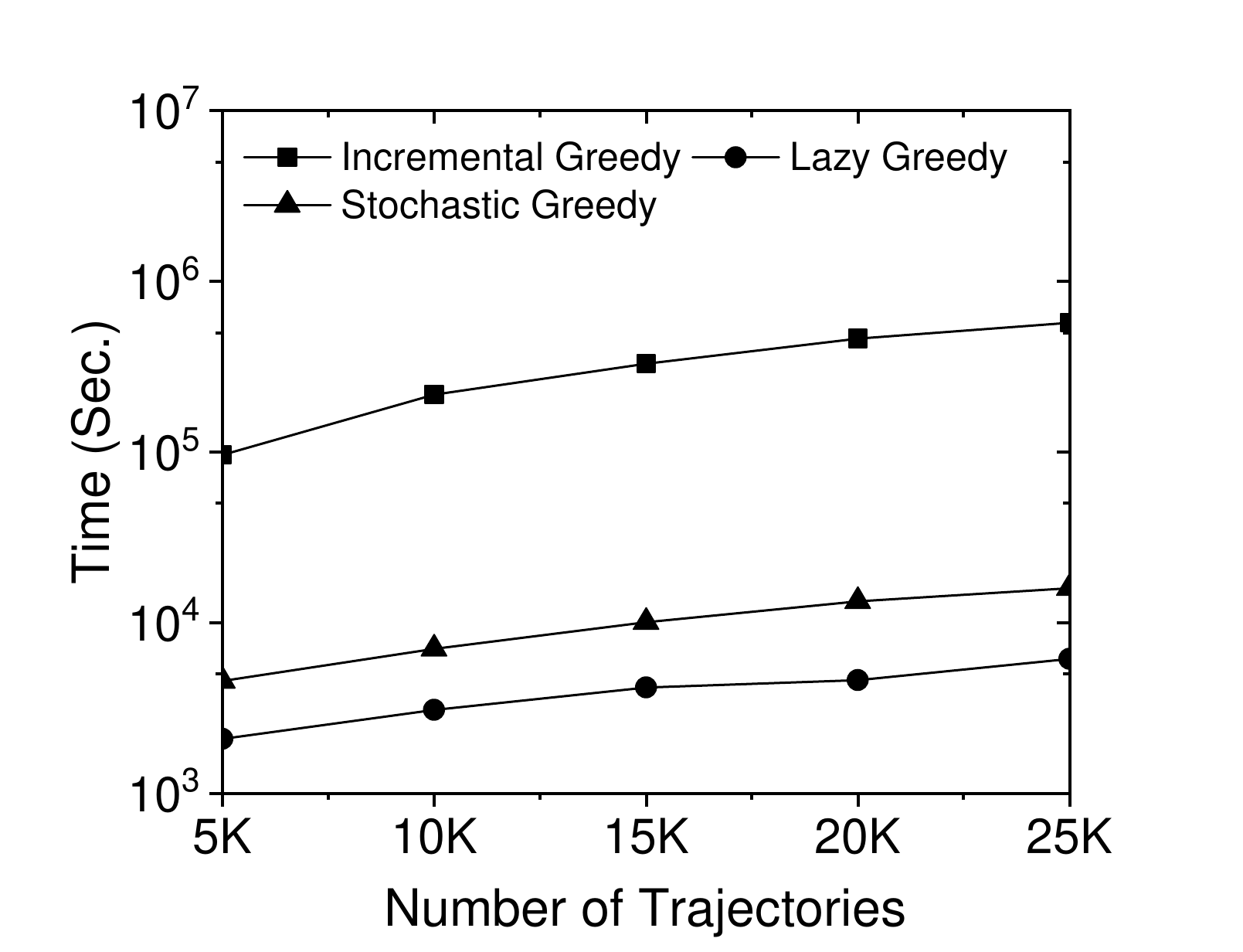} \\
{\tiny(a) $k =200, \ell = 50$} &  {\tiny(b) $k =200, \ell = 50 $} &  {\tiny(c) $k =200, \ell = 50$} & {\tiny (d) $k =200, \ell = 50$}\\
\end{tabular}
\caption{$(1)$ Influence varying $\epsilon ~(a)$, and Time Varying $\epsilon ~(b)$ when $k = 200$, $\ell = 50$ for NYC, LA Dataset. $(2)$ Influence $(c)$, Time $(d)$, varying trajectory size, when $k = 200$, $\ell = 50$, and $\epsilon = 0.01$ for different Algorithms on NYC Dataset.}
\label{Fig:Epsilon_vs_Time_inf_greedy}
\end{figure*}
\vspace{-0.2in}
%%%%%%%%%%%%%%%%%%%%%%%%%%%%%%%%%%%%%%%%%%%%%%%%%%%%%%%%%%%%%%%%%%
\paragraph{\textbf{Epsilon ($\epsilon$) Vs. Influence, Time.}}
Figure \ref{Fig:Epsilon_vs_Time_inf_greedy}(a), \ref{Fig:Epsilon_vs_Time_inf_greedy}(b) shows the impact of varying $\epsilon$ values on `Stochastic Greedy' w.r.t. influence, and time. We find: (1) when the $\epsilon$ value increases, the influence value decreases. The influence is decreasing more in the NYC dataset than in the LA dataset. (2) When the $\epsilon$ value varies from $0.01$ to $0.2$, the run time on both the NYC and LA datasets decreases linearly. In the `Stochastic Greedy', we randomly pick a subset of elements, and the cardinality of the subset depends on the $\epsilon$ value. If the $\epsilon$ value decreases, then the subset size increases, and there is a minimal loss in influence compared to `Incremental Greedy', however run time increases. For example, in the NYC dataset, when $k = 200,\ell = 50$ and vary $\epsilon$ for the value of $0.01$ to $0.2$, the influence values are $641.69$, $631.60$, $610.11$, $594.27$, $593.12$ and the run-times are $193850$, $147011$, $127458$, $110355$, $95880$ in seconds, respectively. A similar type of result was observed on the LA dataset as shown in Figure \ref{Fig:Epsilon_vs_Time_inf_greedy}(a), \ref{Fig:Epsilon_vs_Time_inf_greedy}(b). So, the parameter $\epsilon$, gives us the freedom to compromise either in influence or run time.
\vspace{-0.2in}
%%%%%%%%%%%%%%%%%%%%%%%%%%%%%%%%%%%%%%%%%%%%%%%%%%%%%%%%%%%%%%%%
\begin{figure*}[h!]
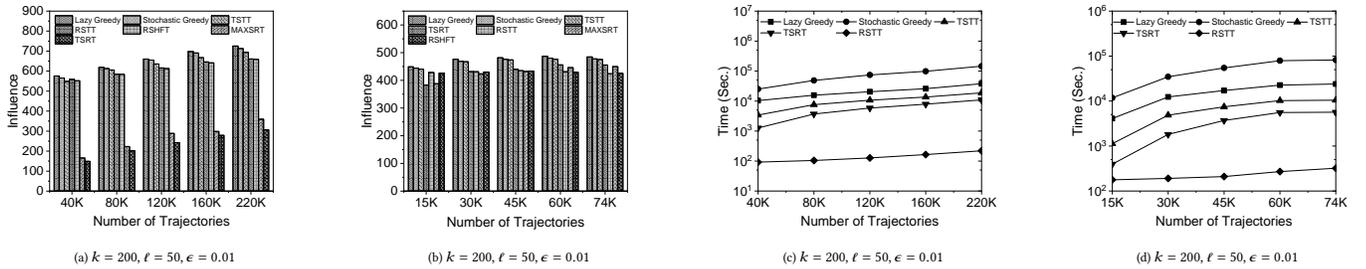

\centering
\begin{tabular}{cccc}
\hspace{-2em}
\includegraphics[width=0.29\textwidth]{Varying_Trajectory/TJ_Vs_INF_NYC} &
\hspace{-2em}
\includegraphics[width=0.29\textwidth]{Varying_Trajectory/TJ_Vs_INF_LA} &
\hspace{-2em}
\includegraphics[width=0.29\textwidth]{Varying_Trajectory/TJ_Vs_Time_NYC} &
\hspace{-2em}
\includegraphics[width=0.29\textwidth]{Varying_Trajectory/TJ_Vs_Time_LA} \\
{\tiny(a) $k=200,\ell=50$} &  
{\tiny(b) $k=200,\ell=50$} &  
{\tiny(c) $k=200,\ell=50$} & 
{\tiny(d) $k=200,\ell=50$}\\
\end{tabular}
\caption{(1) Influence varying Distance $(\lambda)$ when $k=200$, $\ell=50$, and $\epsilon=0.01$: (a) NYC Dataset, (b) LA Dataset. (2) Time varying Distance $(\lambda)$ when $k=200$, $\ell=50$, and $\epsilon=0.01$: (c) NYC Dataset, (d) LA Dataset.}
\label{Fig:Distance_TJ_Vs_INF_Time}
\end{figure*}
\vspace{-0.1in}
\paragraph{\textbf{Additional Discussions.}} To find out the efficiency of `Stochastic Greedy', we compare its performance with `Incremental Greedy' and `Lazy Greedy'. In our experiment, we fixed $k, \ell, \epsilon$ value, which varies over different trajectory sizes. Our experiment shows that `Incremental Greedy' and `Lazy Greedy' achieve the same amount of influence; however, there is a huge difference when talking about run time. The `Stochastic Greedy' achieves less influence than both `Incremental Greedy' and `Lazy Greedy' however it takes much less run time compared to `Incremental Greedy'. As we previously discussed, in the worst case, `Lazy Greedy' will take the same amount of time as `Incremental Greedy' takes, and in our experiment, we, fortunately, got the best case results of `Lazy Greedy' due to the nature of the datasets, as reported in Figure \ref{Fig:Epsilon_vs_Time_inf_greedy}(c), \ref{Fig:Epsilon_vs_Time_inf_greedy}(d). When trajectory size increases from $5k$ to $25k$, the run time of `Incremental Greedy', `Lazy Greedy', and `Stochastic Greedy' also increases from $96044$, $2090$, $4546$ to $572512$, $6144$, $15905$ seconds, i.e., $6$x, $3$x, $3.5$x respectively. So, for trajectory size $25k$, `Incremental Greedy' will take almost $36$x more time than `Stochastic Greedy', and we observe that for larger trajectory size, i.e., $200k$, the `Incremental Greedy will not complete its execution with a reasonable computational time. We take $\lambda = 100~meter$, and assume within the range of $100m$, a billboard slot can influence all trajectories with a certain probability as shown in Figure \ref{Fig:Distance_TJ_Vs_INF_Time}. We have also experimented with varying $\lambda$ values from $25m$ to $125m$ and observed that with the increment of $\lambda$ value, the influence as well as run time increases because one billboard slot can influence more number of trajectories.
\vspace{-0.17in}
\section{Conclusion} \label{Sec:Conclusion}
This paper has studied the problem of jointly selecting influential billboard slots and influential tags. First, we show that the influence function is non-negative, monotone, and bi-submodular. We show that the problem is \textsf{NP-hard} and propose an orthant-wise incremental greedy algorithm that gives a constant factor approximation algorithm. Though this method is simple to understand, it does not scale well when the trajectory dataset is large due to excessive marginal gain computations. To address this, we propose the orthant-wise Lazy and Stochastic Greedy approach, which executes fast while leading to more or less similar influence. Still, the problem is not solved on the ground because we must also report which tag will be displayed in which slot to maximize the influence. Developing more efficient techniques to address slot selection and allocation problems will remain an active area of research in the near future. 
\bibliographystyle{splncs04}
\bibliography{Paper}
\end{document}